\documentclass[submission]{eptcs}
\usepackage[utf8]{inputenc}
\usepackage{amssymb,mathrsfs,amsmath,amsfonts,amsthm,graphicx,pdfpages,todonotes,bbold}
\usepackage{tikz}


\usepackage[frozencache]{minted}

\usepackage{quiver}

\usepackage{xcolor}
\usepackage{framed,color}

\definecolor{shadecolor}{rgb}{1,0.8,0.3}
\definecolor{myurlcolor}{rgb}{0.5,0,0}
\definecolor{mycitecolor}{rgb}{0,0,0.7}
\definecolor{myrefcolor}{rgb}{0,0,0.7}
\definecolor{hyperrefcolor}{rgb}{0.5,0,0}


\usepackage{hyperref}
\hypersetup{
	colorlinks,
	linkcolor={hyperrefcolor},
	citecolor={mycitecolor},
	urlcolor={hyperrefcolor}
}



\newcommand{\define}[1]{{\bf \boldmath{#1}}}



\newcommand{\namedset}[1]{\mathbb{#1}}

\newcommand{\R}{\namedset{R}}


\newcommand{\namedcat}[1]{\mathsf{#1}}
\newcommand{\Set}{\namedcat{Set}}
\newcommand{\Fin}{\namedcat{Fin}}

\newcommand{\C}{\namedcat{C}}
\newcommand{\G}{\namedcat{G}}
\renewcommand{\H}{\namedcat{H}}

\newcommand{\Cat}{\namedcat{Cat}}

\newcommand{\Dynam}{\namedcat{Dynam}}
\newcommand{\StockFlow}{\namedcat{StockFlow}}
\newcommand{\Open}{\namedcat{Open}}

\newcommand{\X}{\namedcat{X}}
\newcommand{\A}{\namedcat{A}}

\newcommand{\ff}{f}

\newcommand{\stock}{\mathrm{stock}}
\newcommand{\flow}{\mathrm{flow}}
\newcommand{\link}{\mathrm{link}}
\newcommand{\variable}{\mathrm{variable}}

\newcommand{\ICU}{\mathrm{ICU}}
\newcommand{\NICU}{\mathrm{NICU}}

\makeatletter
\newcommand*{\relrelbarsep}{.386ex}
\newcommand*{\relrelbar}{%
  \mathrel{%
    \mathpalette\@relrelbar\relrelbarsep
  }%
}
\newcommand*{\@relrelbar}[2]{%
  \raise#2\hbox to 0pt{$\m@th#1\relbar$\hss}%
  \lower#2\hbox{$\m@th#1\relbar$}%
}
\providecommand*{\rightrightarrowsfill@}{%
  \arrowfill@\relrelbar\relrelbar\rightrightarrows
}
\providecommand*{\leftleftarrowsfill@}{%
  \arrowfill@\leftleftarrows\relrelbar\relrelbar
}
\providecommand*{\xrightrightarrows}[2][]{%
  \ext@arrow 0359\rightrightarrowsfill@{#1}{#2}%
}
\providecommand*{\xleftleftarrows}[2][]{%
  \ext@arrow 3095\leftleftarrowsfill@{#1}{#2}%
}
\makeatother

\usepackage{mathtools}
\usepackage[utf8]{inputenc}
\usepackage{csquotes}
\usepackage{color}
\usepackage{tikz}

\usepackage{comment}

\usepackage{graphicx}
\usepackage{adjustbox}
\usepackage[all,2cell]{xy}\UseAllTwocells\SilentMatrices
\definecolor{darkgreen}{rgb}{0,0.45,0}

\usepackage[capitalize]{cleveref}
\crefname{equation}{}{}
\crefname{item}{}{}
\usepackage{enumerate}

\newtheorem*{thm*}{Theorem}
\theoremstyle{remark}
\newtheorem*{rmk*}{Remark}
\newtheorem*{lem*}{Lemma}
\theoremstyle{definition}
\newtheorem*{defn*}{Definition}
\newtheorem*{cor*}{Corollary}
\theoremstyle{definition}
\newtheorem*{examples*}{Examples}
\newtheorem{prop*}{Proposition}

\theoremstyle{plain}
\newtheorem{thm}{Theorem}[section]
\theoremstyle{plain}

\theoremstyle{remark}

\theoremstyle{plain}

\theoremstyle{plain}

\theoremstyle{definition}

\theoremstyle{definition}

\newcommand{\maps}{\colon}


\usepackage{tikz}
\usepackage{tikz-cd}
\usetikzlibrary{backgrounds,circuits,circuits.ee.IEC,shapes,fit,matrix}


\tikzstyle{simple}=[-,line width=2.000]
\tikzstyle{arrow}=[-,postaction={decorate},decoration={markings,mark=at position .5 with {\arrow{>}}},line width=1.100]
\pgfdeclarelayer{edgelayer}
\pgfdeclarelayer{nodelayer}
\pgfsetlayers{edgelayer,nodelayer,main}

\tikzstyle{none}=[inner sep=0pt]

\definecolor{lblue}{rgb}{0,250,255}
\tikzstyle{species}=[circle,fill=yellow,draw=black,scale=1.15]
\tikzstyle{transition}=[rectangle,fill=lblue,draw=black,scale=1.15]
\tikzstyle{inarrow}=[->, >=stealth, shorten >=.03cm,line width=1.5]
\tikzstyle{empty}=[circle,fill=none, draw=none]
\tikzstyle{inputdot}=[circle,fill=black,draw=black, scale=.25]
\tikzstyle{inputarrow}=[->,draw=purple, shorten >=.05cm]
\tikzstyle{simple}=[-,draw=black,line width=1.000]


\usetikzlibrary{arrows,shapes,automata,backgrounds,petri}
\tikzstyle{place}=[circle,thick,draw=blue!75,fill=blue!20,minimum size=6mm]
\tikzstyle{red place}=[place,draw=red!75,fill=red!20]
\tikzstyle{transition}=[rectangle,thick,draw=black!75,
  			  fill=black!20,minimum size=4mm]

\tikzset{-|->/.style={decoration={markings,
      mark=at position 0.5 with {\arrow{|}},
      mark= at position 1 with{\arrow{>}}},
    postaction={decorate}}}




\title{Compositional Modeling with Stock and Flow Diagrams}
\author{John Baez
\institute{Department of Mathematics \\ U.\ C.\ Riverside \\ California, USA}
\email{baez@math.ucr.edu}
\and
Xiaoyan Li
\institute{Department of Computer Science \\ University of Saskatchewan \\ Saskatoon, Canada}
\email{xiaoyan.li@usask.ca}
\and
Sophie Libkind
\institute{Department of Mathematics  \\ Stanford University \\ University of California, USA}
\email{slibkind@stanford.edu}
\and 
Nathaniel D. Osgood
\institute{Department of Computer Science\\ University of Saskatchewan \\ Saskatoon, Canada}
\email{nathaniel.osgood@usask.ca}
\and
Evan Patterson
\institute{Topos Institute \\ California, USA}
\email{evan@topos.institute}
}

\date{July 11, 2022}

\begin{document}

\maketitle

\begin{abstract}
Stock and flow diagrams are widely used in epidemiology to model the dynamics of populations.  Although tools already exist for building these diagrams and simulating the systems they describe, we have created a new package called StockFlow, part of the AlgebraicJulia ecosystem, which uses ideas from category theory to overcome notable limitations of existing software.  Compositionality is provided by the theory of decorated cospans: stock and flow diagrams can be composed to form larger ones in an intuitive way formalized by the operad of undirected wiring diagrams.  Our approach also cleanly separates the syntax of stock and flow diagrams from the semantics they can be assigned.  We consider semantics in ordinary differential equations, although others are possible.  As an example, we explain code in StockFlow that implements a simplified version of a COVID-19 model used in Canada.
\end{abstract}
\section{Introduction}

The theoretical advantages of compositionality and functorial semantics are widely recognized among applied category theorists. \textit{Compositionality} means, at the very least, that systems can be described one piece at a time, with a clear formalism for composing these pieces.  This formalism can appear in various styles: composing morphisms in a category, tensoring objects in a monoidal category, composing operations in an operad, etc.  \emph{Functorial semantics} then means that the map from system descriptions (``syntax'') to their behavior (``semantics'') preserves all the relevant forms of composition.

While these principles are elegant, in many fields it is still a challenge to produce useful software that takes advantage of them and is embraced by the intended users.  This is one of the main challenges of applied category theory.   Here we focus on developing software suited to one particular field: epidemiological modeling.  At present this software is additionally capable of modeling a wide class of systems studied in the System Dynamics modeling discipline \cite{forrester,sterman2000business}.

The AlgebraicJulia ecosystem of software implements compositionality and functorial semantics in a thorough-going way \cite{AlgebraicJulia}.   Decorated and structured cospans  are broad mathematical frameworks for turning ``closed'' system descriptions into ``open'' ones that can be composed along their boundaries \cite{fong2015,baezcourser2020,baez-courser-vasilakopoulou2022}.  One part of AlgebraicJulia, called Catlab \cite{patterson-lynch-fairbanks2021}, provides a generic interface for working with such cospans, among other categorical abstractions.   With the help of Catlab, a tool called AlgebraicPetri was developed to work with one approach to epidemiological modeling based on Petri nets \cite{libkind-baas-halter-patterson-fairbanks2022}.  Here we explain a new tool, StockFlow, which handles a more flexible and more widely used formalism for epidemiological modeling: stock and flow diagrams.

In \cref{sec:epidemiology} we review how stock and flow diagrams are used in epidemiological modeling, and discuss some shortcomings of existing software for working with these diagrams.  In \cref{sec:math} we first use decorated cospans to formalize a simple class of open stock and flow diagrams and their differential equation semantics, and then sketch how to extend this class to the full-fledged diagrams actually used in our software.   In \cref{sec:implementation} we describe the software package, StockFlow, that we have developed to work with stock-flow diagrams compositionally and implement a functorial semantics for them.  The reader can find the StockFlow repository on GitHub at \url{https://github.com/AlgebraicJulia/StockFlow.jl}.

\section{Epidemiological modeling with stock and flow diagrams}
\label{sec:epidemiology}

Effective decision-making regarding prevention, control, and service delivery to address the health needs of the population involves reasoning about diverse complexities: policy resistance, feedbacks, heterogeneities, multi-condition interactions, and nonlinearities that collectively give rise to counterintuitive results \cite{sterman1994learning,trecker2015revised}.  For over a century, researchers and practitioners have used epidemiology models to address such challenges.  Since dynamic epidemiological modeling was first applied to communicable diseases \cite{kermack1927contribution,ross1916application, ross1917application}, it has both deepened its reach in that area \cite{anderson1992infectious, diekmann2000mathematical} and spread to many other subdomains of epidemiology, including chronic, behavioural, environmental, occupational, and social epidemiology, as well as spheres such as mental health and addictions.  Reflecting a world in which growing global interconnection is juxtaposed with increasing ecosystem encroachment and climate stresses, the rise of the ``One Health" perspective \cite{destoumieux2018one,mackenzie2019one} has motivated such modeling to increasingly incorporate dynamics from domains such as ecology, veterinary and agricultural health, and social dynamics and inequities.  Such efforts have come to define the field of mathematical and computational epidemiology.

The earliest and still most common epidemiological models are sets of ordinary differential equations, typically used to characterize epidemiological dynamics in an aggregate fashion \cite{anderson1992infectious}. Delay and partial differential equations have also been widely applied. Recent decades have witnessed a rapid growth in use of agent-based models.
Although the techniques explored here may be more widely applicable, we focus on aggregate models described using differential equations.


Contemporary aggregate-level modeling involves widespread informal use of diagrams, with the most prevalent type  of such diagrams 
being transition diagrams and their richer and more formal cousins, ``stock and flow diagrams'' \cite{sterman2000business}, as depicted in Figure \ref{fig:ExStockAndFlow}. Transition diagrams are a minimalist box-and-arrow formalism which draws state variables as boxes and transitions as arrows. Traditionally, most mathematical epidemiologists have focused directly on the underlying differential equations, regarding such diagrams only as an informal presentation of the equations.  Thus, diagrams are commonly treated either as ephemeral artifacts useful for thinking out structures and then discarded, or as an expedient aid for communication.

\begin{figure*}
    \centering
    \includegraphics[width=.9\linewidth]{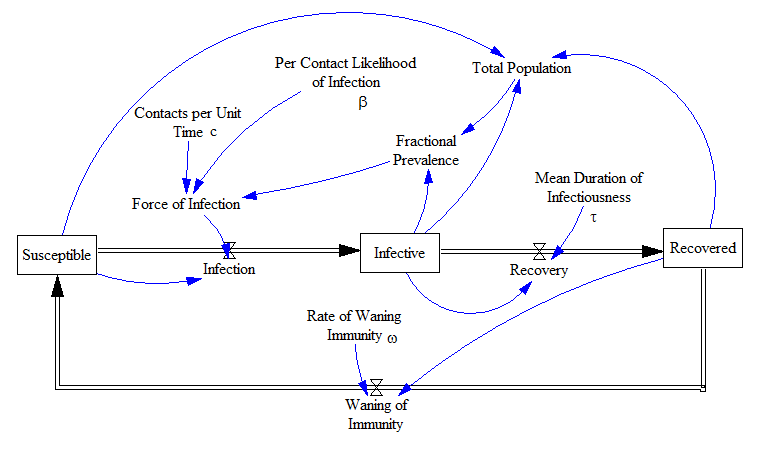}
    \caption{An example stock-flow diagram. This stock-flow diagram has three stocks labeled ``Susceptible", ``Infective", and ``Recovered"; three flows labeled ``Infection", ``Recovery", and ``Waning of Immunity"; many auxiliary variables including ``Force of Infection" and ``Total Population"; and links depicted by blue arrows. }
    \label{fig:ExStockAndFlow}
    \label{fig:ExShttps://www.overleaf.com/project/622660b1168ab0417506df38tockAndFlow}
\end{figure*}

Amongst the notable minority of health modelers who employ stock and flow diagrams (also called ``Forrester diagrams'', and termed here ``stock-flow diagrams" for brevity), these diagrams play roles at different stages of the modeling process. Stock-flow diagrams depict state variables as stocks (rectangles), changes to those stocks as flows (thick arrows also termed ``material connections"), constants and auxiliary variables (also called ``dynamic variables"), and links (arrows sometimes called ``informational connections" or simply ``connections") characterizing instantaneous dependencies. 

Stock-flow diagrams serve as the central formalism in the modeling tradition of System Dynamics, initiated by Forrester in the 1950s \cite{forrester, sterman2000business}.  Since the 1980s, System Dynamics software has provided refined, visually accessible, declarative user interfaces for interactively building, and browsing stock-flow diagrams \cite{richmond1985stella}.  While such packages are designed to ensure transparency of model structure to modelers and stakeholders \cite{richmond1985stella}, they also serve as simulation tools. For that purpose, the System Dynamics tradition  universally interprets stock-flow diagrams as characterizing ordinary differential equations.  Stocks represent the state variables; their formulation requires specifying an initial value. Flows represent the differentials associated with stocks, and are each associated with a modeler-specified mathematical expression specifying the flow rate (quantity per unit time) as a function of other variables. Each constant variable is associated with a real scalar. Auxiliary variables generally reflect quantities of domain significance that depend instantaneously on other model quantities. Each such auxiliary variable is associated with a modeler-specified expression characterizing the value of that auxiliary as a function of the current value of other variables (stocks, flows, constants and other auxiliary variables).

Reflecting the strong emphasis that System Dynamics practice places on stakeholder engagement and participatory model building, models built in the stock-flow paradigm are routinely shown to stakeholders without modeling background---be they domain experts from a modeling team, stakeholders, or community members---to elicit critiques and suggestions \cite{CBSD,richmond1985stella,vennix1996group}. System Dynamics has also long sought to recognize, codify, and exploit widespread use of modeling idioms. Thus, researchers and practitioners have formalized dozens of simple stock-flow diagrams called ``molecules'' for reuse in modeling \cite{molecules}. Some simulation packages provide molecules as pre-specified templates defined by the software, and mechanisms for for directly incorporating built-in templates for such molecules into models.  When a molecule is added to a model, the elements of the molecule---such as stocks and flows---are simply added as elements of the surrounding diagram, rather than being reused as higher level abstractions. Moreover, because such libraries of molecules are fixed, such molecules cannot be created or packaged up by the user.  

Software packages for stock-flow diagrams have as a central feature the simulation, via numerical integration, of the system of ordinary differential equations described by these diagrams.  Many such packages also offer additional forms of model analysis, including identification of feedback loops, performing tests of dimensional homogeneity based on modeler unit annotations, sensitivity analysis and calibration. Some tools support more sophisticated forms of analysis, such as those involving Markov Chain Monte Carlo and extended Kalman filtering.  
But while existing stock-flow modeling tools offer refined interfaces for building, exploring and simulating models, their support for modern modeling practice is hampered by significant rigidity and several additional shortcomings.  The present paper focuses on addressing two limitations of contemporary tools.
  
First, and most notable from a categorical perspective, existing tools \emph{lack support for composition} of models, despite there being several natural ways in which models might be composed.  Instead, each model is currently treated in isolation.  If models are composed at all, it is by either outputting data files from one and importing such data into another, or by creating, via an ad hoc process, a third model that contains both of the original models.

Second, existing stock-flow modeling tools \emph{privilege a single semantics} associated with stock-flow diagrams: the interpretation of these diagrams as ordinary differential equations. While alternative interpretations can sometimes be force-fit---for example, a difference equation interpretation by using Euler integration, or a stochastic differential equation interpretation using formulas for flows drawing from suitable probability distributions---they are commonly awkward, obscure, and error-prone.  Although particular packages allow for select additional analyses---for example, identification of feedback loops---such features are hard-coded, and many analysis tools demonstrated as valuable by research \cite{ guneralp2005progress,kampmann2012feedback,saleh2005comprehensive} have not been incorporated in extant software packages.

We turn next to a mathematical framework that provides a remedy for these deficiencies: an explicitly compositional framework where ``open'' stock-flow diagrams become morphisms in a category and where semantics is described as a functor from this category to some other category.  For reasons of space we only describe one choice of semantics, but the clear separation of syntax and semantics permits swapping out this choice for others.

\section{The mathematics of stock-flow diagrams}
\label{sec:math}

Stock-flow diagrams come in many variants.  To illustrate our methodology we begin with a very simple kind.  In our code we have implemented a more sophisticated variant with additional features, but the ideas are easier to explain without those features.   Our main goal is to study \emph{open} stock-flow diagrams---that is, stock-flow diagrams in which various stocks are specified as ``interfaces.''   We can treat open stock-flow diagrams with two interfaces as morphisms of a category. Composing these morphisms then lets us build larger diagrams from smaller ones.  Alternatively, we can compose stock-flow diagrams with any number of interfaces using an operad.   Both approaches let us describe the differential equation semantics for open stock-flow diagrams following a paradigm already explored for open Petri nets with rates \cite{baez-courser-vasilakopoulou2022,baezpollard2017}.  We describe that paradigm here.


\subsection{A category of stock-flow diagrams}
\label{subsec:category}

As a first step, we define ``primitive'' stock-flow diagrams with stocks, flows, and links but not the all-important functions that describe the rate of each flow.  For this, we consider a category $\H$ freely generated by these objects and morphisms:
    \[
    \begin{tikzcd}
        \flow
        \arrow[rr, shift left = 1, "u"]
        \arrow[rr, shift right = 1, "d", swap]
        & &
        \stock \\
        & \link 
        \arrow[ul, "t"] \arrow[ur, "s", swap]
    \end{tikzcd}
    \]
We call a functor $F \maps \H \to \Fin\Set$ a \define{primitive stock-flow diagram}.  It amounts to the following:
\begin{itemize}
\item a finite set of stocks $F(\stock)$,
\item a finite set of flows $F(\flow)$, 
\item functions $F(u), F(d) \maps F(\flow) \to F(\stock)$ assigning to each flow the stock \define{upstream} from it, and the stock \define{downstream} from it,
\item a finite set of links $F(\link)$,
\item functions $F(s) \maps F(\link) \to
F(\stock), F(t) \maps F(\link) \to F(\flow)$ assigning to each link its \define{source}, which is a stock, and its \define{target}, which is a flow.
\end{itemize}
Given $f \in F(\flow)$, we say $f$ \define{flows from} the upstream stock $F(u)(f)$ and \define{flows to} the downstream stock $F(d)(f)$.  We say that a link $\ell \in F(\link)$ \define{points from} its source $F(s)(\ell)$ and \define{points to} its target $F(t)(\ell)$.    There is a category of primitive stock-flow diagrams, $\Fin\Set^{\H}$, where the objects are functors from $\H$ to $\Fin\Set$ and a morphism from $F \maps \H \to \Fin\Set$ to $G \maps \H \to \Fin\Set$ is a natural transformation.

Primitive stock-flow diagrams are useful for \emph{qualitative} aspects of modeling, since they clearly show which flows depend on which stocks; as such, they can be seen as a restricted form of \textit{system structure diagrams} \cite{lich2020engaging, faghihi2015sustainable} used in System Dynamics practice.   But they become useful for \emph{quantitative} modeling and simulation only when we equip them with functions saying how the rate of each flow depends on the value of each stock.  Thus, we define a \define{stock-flow diagram} to be a pair $(F,\phi)$ consisting of an object $F \in \Fin\Set^{\H}$ and a continuous function called a \define{flow function}
\[  \phi_f \maps \R^{F(t)^{-1}(f)} \to \R \]
for each flow $f \in F(\flow)$, where $F(t)^{-1}(f)$ is the set of links with target $f$:
\[  F(t)^{-1}(f) = \{ \ell \in F(\link)  \; \vert \; F(t)(\ell) = f \} .\]
The idea is that the flow function $\phi_f$ says how the rate of the flow $f$ depends on the values of all the stocks with links pointing to it.  We make this precise in \cref{subsec:semantics} when we introduce a semantics that maps each stock-flow diagram to a first-order differential equation.  But rates and values play no formal role in this section.

To define a category of stock-flow diagrams, we need to define morphisms between them.  What is a morphism from $(F,\phi)$ to $(G,\psi)$?   It is a natural transformation $\alpha \colon F \Rightarrow G$ with an extra property.  Because $\alpha$ is natural, we get a commutative square
 \[\begin{tikzcd}
        F(\link) 
        \arrow[r, "\alpha(\link)"]
        \arrow[d, "F(t)", swap]
        & G(\link) 
        \arrow[d, "G(t)"] \\
        F(\flow) 
        \arrow[r, "\alpha(\flow)", swap]
        & G(\flow).
    \end{tikzcd}\]
Thus, letting $g = \alpha(\flow)(f)$ for $f \in F(\flow)$, we get a map
\[ \alpha(\link) \maps F(t)^{-1} (f) \to
G(t)^{-1} (g) \]
and thus a linear map
\[   \alpha(\link)^\ast \maps \R^{G(t)^{-1} (g)} \to \R^{F(t)^{-1} (f)} \]
given by precomposition:
\[   \alpha(\link)^\ast (x) = x \circ \alpha(\link) .\]
We say that $\alpha$ is a \define{morphism of stock-flow diagrams} from $(F,\phi)$ to $(G,\psi)$ if
\begin{equation}\label{eq:pushforward}
    \psi_g = \sum_{f \in \alpha(\text{flow})^{-1}(g)} \phi_f \circ \alpha(\text{link})^*
\end{equation}
for every $g \in G(\flow)$.    This equation expresses rates of flows in $(G,\psi)$ as sums of rates of flows in $(F,\phi)$.   For example, Figure \ref{fig:SIR_to_SIS} shows a morphism of stock-flow diagrams in which two flows, ``recovery" $r$ and ``death" $d$, are mapped to a single ``removal" flow $e$.   The above equation implies that
\[   \psi_e = \phi_r \circ \alpha(\text{link})^* + 
\phi_d \circ \alpha(\text{link})^* . \]
This equation says that the rate of the flow $e$ is the 
sum of the rates of $r$ and $d$.   

Composition of morphisms between stock-flow diagrams is just composition of their underlying natural transformations; one can show that indeed the composite of two natural transformations obeying Eq.\ \cref{eq:pushforward} again obeys this equation.  We thus obtain a category of stock-flow diagrams, which we call $\StockFlow$.

\begin{figure}
    \centering
    \includegraphics[width=0.9\linewidth]{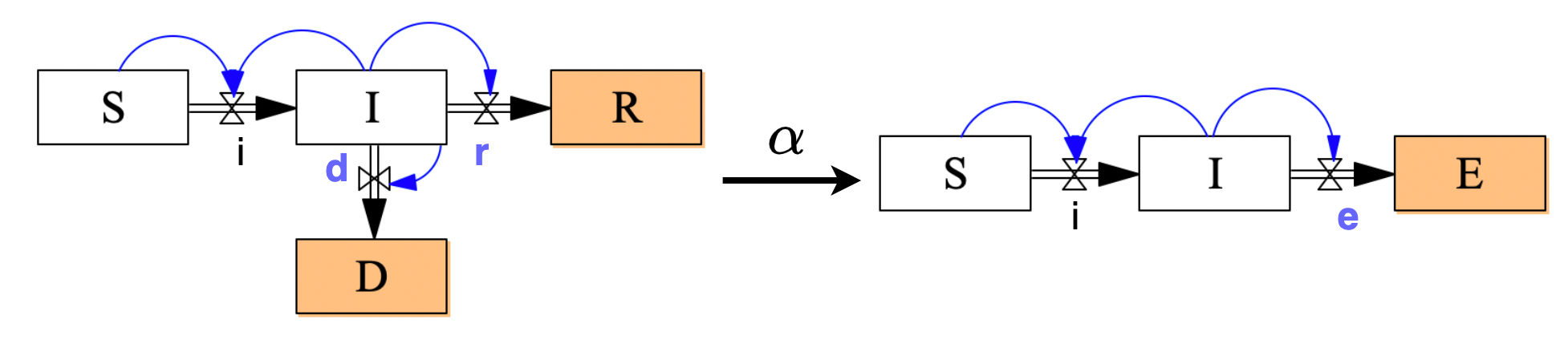}
    \caption{At left is a stock-flow diagram $(F,\phi)$ with stocks $S, I, R, D$ corresponding to  susceptible, infected, recovered and deceased  populations and flows $i,r,d$ corresponding to infection, recovery and death.  Links are shown in blue.  At right we see a simpler stock-flow diagram $(G,\psi)$ where recovered and deceased populations are lumped into a single ``removed'' stock $E$, and recovery and death are lumped into a single ``removal'' flow $e$.  There is an evident morphism $\alpha \maps (F,\phi) \to (G,\psi)$ sending $R$ and $D$ to $E$ and sending $r$ and $d$ to $e$.}   
    \label{fig:SIR_to_SIS}
\end{figure}

\subsection{Open stock-flow diagrams}
\label{subsec:open}

We can build larger stock-flow diagrams by gluing together smaller ones.   There are a number of choices of how to formalize this.   Here we glue together two stock-flow diagrams by identifying two collections of stocks to serve as ``interfaces."  Thus, we define an \define{open stock-flow diagram} with finite sets $X$ and $Y$ as interfaces to be a stock-flow diagram $(F,\phi)$ equipped with functions from $X$ and $Y$ to its set of stocks:
\[
\begin{tikzcd}
& F(\stock) & \\
X\ar[ur,"i"] && Y \ar[ul,swap,"o"]
\end{tikzcd}
\]
We call this an open stock-flow diagram from $X$ to $Y$ and write it tersely as $(F,\phi) \colon X \to Y$, despite the maps $i$ and $o$ being a crucial part of the structure.

We can compose open stock-flow diagrams from $X$ to $Y$ and from $Y$ to $Z$ to obtain one from $X$ to $Z$.  To formalize this composition process we use Fong's theory of decorated cospans \cite{fong2015}. However, to make composition associative and get a category we need to use \emph{isomorphism classes} of open stock-flow diagrams.   Two open stock-flow diagrams $(F,\phi)$ and $(F',\phi')$ from $X$ to $Y$ are \define{isomorphic} if there is an isomorphism of stock-flow diagrams $\alpha \maps (F,\phi) \to (F',\phi')$ such that this diagram commutes:
\[
\begin{tikzcd}
& F(\stock)\ar[dd,swap,"\alpha(\stock)"] & \\
X\ar[ur,"i"]\ar[dr,swap,"i'"] && Y \ar[ul,swap,"o"] \ar[dl,"o'"]\\
& F'(\stock)   &
\end{tikzcd}
\]

Using the theory of decorated cospans, we obtain:

\begin{thm}
\label{thm:openstockflow}
There is a category $\Open(\StockFlow)$ such that:
\begin{itemize}
\item An object is a finite set $X$.
\item A morphism from $X$ to $Y$ is an isomorphism class of open stock-flow diagrams from $X$ to $Y$.
\end{itemize}
This is a symmetric monoidal category, indeed a hypergraph category.
\end{thm}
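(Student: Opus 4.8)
The plan is to invoke Fong's decorated cospan machinery directly, so the main work is to identify the correct symmetric lax monoidal functor whose decorations are stock-flow diagrams. First I would recall that a decorated cospan category $\Open(F)$ arises from a category $\mathsf{A}$ with finite colimits (here $\mathsf{A} = \Fin\Set$, whose colimits of interest are pushouts) together with a symmetric lax monoidal functor $F \maps (\Fin\Set, +) \to (\Cat, \times)$, where $F(S)$ is the category of ``decorations'' on the set $S$. In our situation I would set $F(S)$ to be the category whose objects are stock-flow diagrams $(G,\psi)$ equipped with an identification $G(\stock) = S$, i.e.\ whose underlying set of stocks is $S$, and whose morphisms are morphisms of stock-flow diagrams in the sense of Eq.\ \cref{eq:pushforward} that fix the set of stocks. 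The lax monoidal structure comes from disjoint union: given decorations on $S$ and $S'$, their disjoint union is a decoration on $S + S'$, and crucially the flow functions combine without interference because a flow in one summand has all its incoming links within that summand, so the pushforward formula in Eq.\ \cref{eq:pushforward} behaves correctly under coproducts.

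Next I would spell out how a morphism $X \to Y$ in $\Open(\StockFlow)$ is a cospan $X \to S \leftarrow Y$ in $\Fin\Set$ decorated by an object of $F(S)$, where $S = G(\stock)$; this matches exactly the data of an open stock-flow diagram $(G,\psi)$ with interface maps $i \maps X \to G(\stock)$ and $o \maps Y \to G(\stock)$ given in \cref{subsec:open}. Composition of decorated cospans is by pushout of the underlying cospans together with the functorial action of $F$ on the pushout inclusions, which pushes the two decorations forward onto the glued stock set and combines them; the fact that we take isomorphism classes is precisely what makes this composition strictly associative and unital, matching the isomorphism relation defined in \cref{subsec:open}. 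Having exhibited the data as a decorated cospan category, Fong's theorem (as refined in \cite{baez-courser-vasilakopoulou2022}) immediately yields a category, and indeed a symmetric monoidal category with monoidal product given by disjoint union of interfaces and decorations.

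To upgrade from symmetric monoidal to hypergraph category, I would appeal to the general result that any decorated cospan category is a hypergraph category: the Frobenius structure on each object $X$ is inherited from the canonical special commutative Frobenius structure carried by every object in the cospan category $\Csp(\Fin\Set)$, lifted through the decoration by equipping it with the trivial (empty) stock-flow decoration. Concretely, each finite set $X$ comes with multiplication and comultiplication maps built from the fold map $X + X \to X$ and its opposite, decorated by the empty diagram, and these satisfy the Frobenius and special axioms because they already do so in $\Csp(\Fin\Set)$ and the decoration functor preserves them. The main obstacle I anticipate is verifying that $F$ is genuinely lax monoidal and functorial in the way required: one must check that pushing a stock-flow decoration forward along a function of stock-sets (via the induced action on flows and links) interacts correctly with the flow-function pushforward of Eq.\ \cref{eq:pushforward}, and in particular that composition of decorated cospans reproduces the intended summation of flow rates when interface stocks are identified. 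This is essentially a bookkeeping verification, but it is where all the content lives, since everything downstream is a formal consequence of the decorated cospan formalism.
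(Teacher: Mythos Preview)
Your proposal is correct and follows essentially the same route as the paper: identify a symmetric lax monoidal decoration functor sending a finite set $S$ to the stock-flow diagrams with stock-set $S$, verify the laxator is given by disjoint union, and then invoke Fong's results (\cite[Prop.\ 3.2 and Thm.\ 3.4]{fong2015}) to obtain a hypergraph category. The only cosmetic difference is that the paper works with a $\Set$-valued functor $C \maps (\Fin\Set,+) \to (\Set,\times)$ whose values are already sets of \emph{isomorphism classes} of decorations, whereas you target $(\Cat,\times)$ and pass to isomorphism classes afterward; the paper explicitly notes (in the remark following the theorem) that the $\Cat$-valued version exists and yields the same category upon decategorification, so this is not a substantive divergence.
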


\begin{proof}
This follows from the theory of decorated cospans \cite{fong2015} once we check the following facts.  For each finite set there is a category $\mathsf{C}(S)$ whose
\begin{itemize} 
\item objects are stock-flow diagrams with $S$ as their set of stocks, and
\item morphisms are morphisms $\alpha$ of stock-flow diagrams where $\alpha(\text{stock})$ is the identity on $S$.
\end{itemize}
Let $C(S)$ be the set of isomorphism classes of objects in this category.  A map of finite sets $f \maps S \to S'$ functorially determines a map from $C(S)$ to $C(S')$, and the resulting functor $C$ is symmetric lax monoidal from $(\Fin\Set, +)$ to $(\Set, \times)$, where the laxator
\[   \gamma \maps C(S) \times C(S') \to C(S + S')  \]
maps a pair of stock-flow diagrams to their ``disjoint union."  It follows from  \cite[Prop.\ 3.2]{fong2015} that we get the desired symmetric monoidal category $\Open(\StockFlow)$, and from \cite[Thm.\ 3.4]{fong2015}  that this is a hypergraph category.
\end{proof}

The point of making open stock-flow diagrams into the morphisms of a hypergraph category is that it gives ways of composing these diagrams that are more flexible than just composing them ``end-to-end" (ordinary composition of morphisms) and ``side-by-side" (a parallel arrangement expressed by tensoring).   Indeed, hypergraph categories are algebras of an operad, sometimes called the operad of undirected wiring diagrams, that encapsulates a wide range of composition strategies \cite{fongspivak2018}.  We use this approach in our code, and instead of working with cospans we actually use multicospans \cite{libkind-baas-patterson-fairbanks2021,spivak2013}, a mild generalization that allows for open stock-flow diagrams with any number of interfaces, not just two.

We conclude with a technical remark on Theorem \ref{thm:openstockflow}.  In fact there is a symmetric lax monoidal functor $\mathsf{C} \maps (\Fin\Set,+) \to (\Cat,\times)$ that sends each finite set $S$ to a \emph{category} of stock-flow diagrams with $S$ as their set of stocks.  Theorem 2.2 of  \cite{baez-courser-vasilakopoulou2022} thus gives a symmetric monoidal double category $\mathbb{O}\mathbf{pen}(\StockFlow)$ where objects are finite sets and horizontal 1-cells are actual open stock-flow diagrams, not mere isomorphism classes of these.  

This double category allows us to work with maps \emph{between} open stock-flow diagrams.  This should be useful for mapping several stocks to a single stock in a simplified stock-flow diagram, as in Figure 2, or embedding a stock-flow diagram in a more complicated one.  However, StockFlow currently does not attempt to support maps between open stock-flow diagrams, so \cref{thm:openstockflow} suffices for us.  Indeed, when working with a mere \emph{category} of open stock-flow diagrams, as opposed to a double category, we can define an isomorphic category using structured rather than decorated cospans: for open stock-flow diagrams, the difference only becomes visible at the double category level.  Thus, our treatment using decorated cospans looks forward to a future where we work with maps between open stock-flow diagrams.

\subsection{Open dynamical systems}
\label{subsec:opendynam}

Our next goal is to define a semantics for stock-flow diagrams mapping each such diagram to a dynamical system: a system of differential equations that describes the continuous-time evolution of the value of each stock.  This semantics is implicit in the usual applications of stock-flow diagrams; indeed, the stock-flow diagram is sometimes regarded merely as a convenient notation for a dynamical system.  While we illustrate the choice of a semantics for stock-flow diagrams using the continuous dynamical system interpretation, this semantics holds no privileged status, and there are several other semantics of practical value that could be employed instead.

In fact, our semantics is more general than suggested above: we describe a map from \emph{open} stock-flow diagrams to \emph{open} dynamical systems.  Our strategy for defining this semantics closely follows the strategy already used for open Petri nets with rates \cite{baezcourser2020,baez-courser-vasilakopoulou2022,baezpollard2017} and implemented for epidemiological models using AlgebraicJulia \cite{libkind-baas-halter-patterson-fairbanks2022}.  

For Petri nets with rates, the dynamics is typically described by the ``law of mass action," which only produces dynamical systems that are polynomial-coefficient vector fields on $\R^n$.   In stock-flow diagrams this restriction is dropped, but the rate of any flow out of its upstream stock equals the rate of flow into its downstream stock, so the total value of all stocks is conserved.  However, the more general stock-flow diagrams of \cref{subsec:full-schema} no longer obey this conservation law, since they allow ``inflows'' and ``outflows'' to the diagram as a whole.  With these generalizations, stock-flow diagrams become strictly more general than Petri nets with rates---at least in terms of the dynamical systems they can describe.

We begin by defining a \define{dynamical system} on a finite set $S$ to be a continuous vector field $v \maps \R^S \to \R^S$.  In our applications, $S$ will be the set of stocks of some stock-flow model, and the vector field $v$ is used to write down a differential equation describing the dynamics:
\[
\frac{d x(t)}{dt} = v(x(t)) 
\]
where at each time $t$, the vector $x(t) \in \R^S$ describes the value of each stock at time $t$.  Since the vector field is continuous, the Peano existence theorem implies that, for any initial value $x(0) \in \R^S$, the above equation has a solution for all $t$ in some interval $(-\epsilon, \epsilon)$.   However, the solution may not be unique unless we require that $v$ be nicer.  The theory we develop now can be modified to add extra restrictions, simply by replacing continuous functions with functions of a suitably nicer sort.

We define an \define{open} dynamical system from the finite set $X$ to the finite set $Y$ to be a pair $(S,v)$, consisting of a finite set $S$ and a dynamical system $v$ on $S$, together with functions from $X$ and $Y$ into $S$.  We depict this as follows:
\[
\begin{tikzcd}
& S &&  v \in D(S)\\[-10pt]
X\ar[ur,"i"] && Y \ar[ul,swap,"o"]
\end{tikzcd}  
\]
where $D(S)$ is the set of all dynamical systems on $S$.  Two open dynamical systems $(S,v)$ and $(S',v')$ from $X$ to $Y$ are \define{isomorphic} if there is a bijection $\beta \maps S \to S'$ such that the following diagram commutes:
\[
\begin{tikzcd}
& S \ar[dd,swap,"\beta"] && v \in D(S) \\[-10pt]
X\ar[ur,"i"]\ar[dr,swap,"i'"] && Y \ar[ul,swap,"o"] \ar[dl,"o'"]\\[-10pt]
& S' && v' \in D(S')
\end{tikzcd}
\]
and $\beta_* \circ v \circ \beta^* = v'$, where $\beta_*: \R^{S} \to \R^{S'}$ is the pushforward map defined by 

\[
\beta_*(x)(\sigma') = \sum_{\sigma \in \beta^{-1}(\sigma')} x(\sigma) \qquad \forall x \in \R^S,\ \sigma' \in S'.
\]

We can then construct a category where objects are finite sets and morphisms from $X$ to $Y$ are isomorphism classes of open dynamical systems from $X$ to $Y$.  This was done in \cite[Theorem 17]{baezpollard2017} by applying Fong's theory of decorated cospans to a functor $D \maps \Fin\Set \to \Set$ sending any finite set $S$ to the set $D(S)$ of  dynamical systems on $S$:

\begin{thm}[\textbf{Baez--Pollard}]
\label{thm:opendynam}
There is a category $\Open(\Dynam)$ such that:
\begin{itemize}
\item An object is a finite set $X$.
\item A morphism from $X$ to $Y$ is an isomorphism class of open dynamical systems from $X$ to $Y$.
\end{itemize}
This is a symmetric monoidal category, indeed a hypergraph category.
\end{thm}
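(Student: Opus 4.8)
The plan is to mirror the proof of \cref{thm:openstockflow} exactly, feeding the appropriate decoration functor into Fong's theory of decorated cospans~\cite{fong2015}. Concretely, it suffices to exhibit a symmetric lax monoidal functor $D \maps (\Fin\Set, +) \to (\Set, \times)$ sending each finite set $S$ to the set $D(S)$ of continuous vector fields $v \maps \R^S \to \R^S$; once this is in hand, \cite[Prop.\ 3.2]{fong2015} produces the symmetric monoidal category $\Open(\Dynam)$ with the stated objects and morphisms, and \cite[Thm.\ 3.4]{fong2015} upgrades it to a hypergraph category. Note that, in contrast to the stock-flow case, $D(S)$ is already an honest set of vector fields rather than a set of isomorphism classes of objects in some auxiliary category, so no quotient is needed to define the decoration.

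The first task is to define $D$ on morphisms. Given $f \maps S \to S'$, I would use the linear pullback $f^\ast \maps \R^{S'} \to \R^S$, $f^\ast(y) = y \circ f$, together with the linear pushforward $f_\ast \maps \R^S \to \R^{S'}$ given by the same formula as in the isomorphism condition above,
\[ f_\ast(x)(\sigma') = \sum_{\sigma \in f^{-1}(\sigma')} x(\sigma), \]
and set $D(f)(v) = f_\ast \circ v \circ f^\ast$. This is again a continuous vector field on $\R^{S'}$ because $f_\ast$ and $f^\ast$ are linear, hence continuous. Functoriality follows from the identities $(g f)^\ast = f^\ast \circ g^\ast$ and $(g f)_\ast = g_\ast \circ f_\ast$, which give $D(g)\bigl(D(f)(v)\bigr) = g_\ast \circ f_\ast \circ v \circ f^\ast \circ g^\ast = D(g f)(v)$, while $D(\id{S})$ is the identity since the pullback and pushforward along $\id{S}$ are both the identity on $\R^S$.

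Next I would supply the lax monoidal structure. Using the canonical isomorphism $\R^{S + S'} \cong \R^S \times \R^{S'}$, the laxator
\[ \gamma \maps D(S) \times D(S') \to D(S + S') \]
sends a pair $(v, v')$ to the ``disjoint union'' vector field acting as $v$ on the $\R^S$ factor and as $v'$ on the $\R^{S'}$ factor; the unit morphism picks out the unique vector field on $\R^\emptyset$. It then remains to check that $\gamma$ is natural in $S$ and $S'$ and that it satisfies the associativity, unit, and symmetry coherence axioms of a symmetric lax monoidal functor---all of which reduce to the compatibility of the decomposition $\R^{S+S'} \cong \R^S \times \R^{S'}$ with the pushforward and pullback maps, and are routine diagram chases.

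I expect the only genuinely delicate point to be the verification that $D$ is functorial on morphisms, since the pushforward of a vector field $v \mapsto f_\ast \circ v \circ f^\ast$ is a less standard operation than the decoration maps in Fong's simplest examples, and one must confirm both that continuity is preserved and that the two mixed-variance composites interact correctly under composition of functions. The computation above shows this works precisely because $(\,\cdot\,)_\ast$ is covariant and $(\,\cdot\,)^\ast$ is contravariant with matching functoriality. Everything after that---the coherence of $\gamma$ and the passage to $\Open(\Dynam)$---is bookkeeping plus a direct appeal to Fong's two cited results, exactly as in \cref{thm:openstockflow}.
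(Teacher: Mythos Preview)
Your proposal is correct and matches the paper's approach essentially verbatim: the paper cites \cite[Theorem 17]{baezpollard2017}, whose argument is exactly the construction you describe---define $D(f)(v) = f_\ast \circ v \circ f^\ast$ using the linear pullback and pushforward, check functoriality via the co/contravariance identities, equip $D$ with the ``disjoint union'' laxator, and invoke \cite[Prop.\ 3.2 and Thm.\ 3.4]{fong2015}. There is nothing missing or different.
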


In fact, there is a symmetric lax monoidal functor $\mathsf{D} \maps (\Fin\Set,+) \to (\Cat,\times)$ that maps any set $S$ to the discrete category on the set $D(S)$ described above.  The theory of decorated cospans then gives a symmetric monoidal \emph{double} category $\mathbb{O}\mathbf{pen}(\Dynam)$ where objects are finite sets and horizontal 1-cells are open dynamical systems.  This is discussed in \cite[Sec.\ 6.4]{baez-courser-vasilakopoulou2022}.

\subsection{Open dynamical systems from open stock-flow diagrams}
\label{subsec:semantics}

Next we describe a functor sending any open stock-flow diagram to an open dynamical system.    Suppose we have an open stock-flow diagram $(F,\phi) \maps X \to Y$, equipped with the cospan
\[
\begin{tikzcd}
& S & \\
X\ar[ur,"i"] && Y \ar[ul,swap,"o"]
\end{tikzcd}  
\]
where $S = F(\stock)$.  Then there is an open dynamical system $v(F,\phi)$ on $S$ given by
\begin{equation}
\label{eq:v1}
v(F,\phi)(x)(\sigma) = \sum_{f \in F(d)^{-1}(\sigma)} \phi_f(x \circ F(s)) \; - \sum_{f \in F(u)^{-1}(\sigma)} \phi_f(x \circ F(s)) 
\qquad \forall x \in \R^S, \sigma \in S .
\end{equation}
This formula looks a bit cryptic, so let us explain it.
Taking the expression $\R^S$ seriously, we can think of $x \in \R^S$ as a real-valued function on the set $S$ of stocks.  Each flow $f \in F(\flow)$ has a set $F(t)^{-1}(f)$ of links with $f$ as target, so there is an inclusion of sets $F(t)^{-1}(f) \hookrightarrow F(\link)$, and we can thus form the composite 
\[
F(t)^{-1}(f) \hookrightarrow F(\link) \xrightarrow{F(s)} F(\stock) \xrightarrow{x} \R
\] 
which for short we call simply
\[    x \circ F(s) \in \R^{F(t)^{-1}(f)}.\]
For each link $\ell$ with $f$ as its target, this composite gives the value of the stock that is $\ell$'s source.  Applying the function $\phi_f \maps \R^{F(t)^{-1}(f)} \to \R$, we obtain the rate of the flow $f$:
\[    \phi_f ( x \circ F(s)) \in \R .\]
This quantity has the effect of increasing the stock $d(f)$ and also decreasing the stock $u(f)$.   Thus the rate of change of any stock $\sigma \in S$ is
\[   \sum_{f \in F(d)^{-1}(\sigma)} \phi_f(x \circ F(s)) - \sum_{f \in F(u)^{-1}(\sigma)} \phi_f(x \circ F(s)). \]
This gives our formula for $v(F,\phi)$ in Equation \cref{eq:v1}.

Now, recall that in Theorem \ref{thm:openstockflow} the category of open stock-flow diagrams was defined as a decorated cospan category using the functor $C \maps \Fin\Set \to \Set$, while in Theorem \ref{thm:opendynam} the category of open dynamical systems was defined in a similar way using the functor $D \maps \Fin\Set \to \Set$.  According to the theory \cite{fong2015}, to obtain a semantics mapping open stock-flow diagrams to open dynamical systems, we need to define a natural transformation $\theta \maps C \Rightarrow D$. We do this as follows: for each finite set $S$, define $\theta(S)$ to map the isomorphism class $(F,\phi)$ in $C(S)$ to the isomorphism class of $v(F,\phi)$ in $D(S)$.   

\begin{thm}
\label{thm:semantics}
There is a functor
\[  v \maps \Open(\StockFlow) \to \Open(\Dynam) \]
sending
\begin{itemize}
\item any finite set to itself,
\item the isomorphism class of the stock-flow diagram $(F,\phi)$ made open as follows:
\[
\begin{tikzcd}
& F(\stock) & \\
X\ar[ur,"i"] && Y \ar[ul,swap,"o"]
\end{tikzcd}
\]
to the isomorphism class of the open dynamical system 
\[
\begin{tikzcd}
& F(\stock) &&  v(F,\phi) \in D(S).\\
X\ar[ur,"i"] && Y \ar[ul,swap,"o"]
\end{tikzcd}  
\]
\end{itemize}
This is a symmetric monoidal functor, indeed a hypergraph functor.  
\end{thm}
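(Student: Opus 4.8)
The plan is to reduce the entire statement to the functoriality of the decorated cospan construction. Recall that $\Open(\StockFlow)$ and $\Open(\Dynam)$ were built in Theorems \ref{thm:openstockflow} and \ref{thm:opendynam} as decorated cospan categories from the symmetric lax monoidal functors $C, D \maps (\Fin\Set, +) \to (\Set, \times)$. By the theory of decorated cospans \cite{fong2015}, a monoidal natural transformation $\theta \maps C \Rightarrow D$ induces a symmetric monoidal functor between the associated decorated cospan categories that is the identity on objects and relabels each cospan's decoration by the appropriate component of $\theta$. Moreover, because the hypergraph structure on a decorated cospan category is inherited from $\Csp(\Fin\Set)$ and is left untouched by $\theta$, this induced functor is automatically a hypergraph functor \cite[Thm.\ 3.4]{fong2015}. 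Thus the whole theorem reduces to checking that the assignment $\theta_S \maps C(S) \to D(S)$, $(F,\phi) \mapsto v(F,\phi)$ defined by Equation \cref{eq:v1}, is a well-defined monoidal natural transformation.

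First I would check that $\theta_S$ is well defined. Formula \cref{eq:v1} manifestly produces a continuous vector field on $\R^S$: each $\phi_f$ is continuous, precomposition with $F(s)$ is a linear (hence continuous) map, and finite sums of continuous functions are continuous, so $v(F,\phi) \in D(S)$. It then remains to see that $v(F,\phi)$ depends only on the class of $(F,\phi)$ in $C(S)$. If $\alpha \maps (F,\phi) \to (F',\phi')$ is an isomorphism with $\alpha(\stock) = \id{S}$, then $\alpha(\flow)$ and $\alpha(\link)$ are bijections intertwining the upstream, downstream, and source maps while satisfying the pushforward condition \cref{eq:pushforward}; substituting these identities into \cref{eq:v1} and reindexing the two sums along the bijection $\alpha(\flow)$ shows that $v(F,\phi) = v(F',\phi')$.

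Next comes the naturality square: for every $f \maps S \to S'$ in $\Fin\Set$ I must establish
\[
v(C(f)(F,\phi)) \;=\; f_* \circ v(F,\phi) \circ f^* .
\]
Here $C(f)$ leaves the flows, the links, the target map, and the flow functions $\phi$ unchanged while postcomposing the upstream, downstream, and source maps with $f$; and $D(f)(w) = f_* \circ w \circ f^*$, where $f^* x' = x' \circ f$ is the pullback and $f_*$ is the pushforward. Unwinding both sides, the left-hand side evaluated at $x' \in \R^{S'}$ and $\sigma' \in S'$ becomes the sum over flows $g$ with $f(F(d)(g)) = \sigma'$ (minus the analogous upstream sum) of $\phi_g$ applied to $x' \circ f \circ F(s) = (f^* x') \circ F(s)$, while the right-hand side is $\sum_{\sigma \in f^{-1}(\sigma')} v(F,\phi)(f^* x')(\sigma)$. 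The two agree once one observes that summing over $\sigma \in f^{-1}(\sigma')$ and then over flows $g$ with $F(d)(g) = \sigma$ is exactly summing over all $g$ with $f(F(d)(g)) = \sigma'$. This reindexing of sums is the one genuinely content-bearing step, and I expect it to be the main obstacle: it is precisely the point at which the pushforward of dynamical systems in $\Open(\Dynam)$ is made to match the merging of stocks in $\Open(\StockFlow)$. Everything else is bookkeeping.

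Finally I would verify monoidality of $\theta$. For the laxator $\gamma \maps C(S) \times C(S') \to C(S + S')$ taking the disjoint union of diagrams, and the laxator $\delta_{S,S'}$ for $D$ sending $(w,w')$ to $i_* \circ w \circ i^* + i'_* \circ w' \circ {i'}^*$ (with $i, i'$ the coproduct injections), I must check $\theta_{S+S'} \circ \gamma = \delta_{S,S'} \circ (\theta_S \times \theta_{S'})$, together with the trivial unit condition at $\emptyset$. Since a disjoint union of stock-flow diagrams has no flow whose upstream or downstream stock crosses between the two summands, the sums in \cref{eq:v1} split cleanly into an $S$-part and an $S'$-part, which is exactly the value of the right-hand laxator, and the empty diagram yields the zero vector field. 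With $\theta$ shown to be a monoidal natural transformation, the cited decorated cospan machinery delivers the symmetric monoidal hypergraph functor $v$ described in the statement.
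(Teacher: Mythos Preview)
Your proposal is correct and follows exactly the paper's strategy: the paper's proof consists of the single sentence ``By \cite[Thm.\ 4.1]{fong2015} it suffices to check that $\theta \maps C \Rightarrow D$ is indeed a natural transformation and furthermore a \emph{monoidal} natural transformation,'' and you have simply spelled out that verification (well-definedness on isomorphism classes, naturality, and compatibility with the laxators) in detail.
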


\begin{proof}
By \cite[Thm.\ 4.1]{fong2015} it suffices to check that $\theta \maps C \Rightarrow D$ is indeed a natural transformation and furthermore a \emph{monoidal} natural transformation.
\end{proof}

With more work one can extend the natural transformation $v$ to a monoidal natural transformation between the 2-functors $\mathsf{C} \maps \Fin\Set \to \Cat$ and $\mathsf{D} \maps \Fin\Set \to \Cat$.  By \cite[Thm.\ 2.5]{baez-courser-vasilakopoulou2022}, this gives a symmetric monoidal double functor from $\mathbb{O}\mathbf{pen}(\mathsf{StockFlow})$ to $\mathbb{O}\mathbf{pen}(\mathsf{Dynam})$.  However, we do not need this yet in our code.

\subsection{Full-fledged stock-flow diagrams}
\label{subsec:full-schema}

In Section~\ref{subsec:category} we defined a simple category of stock-flow diagrams, called $\StockFlow$.  Stock-flow diagrams of this type capture two main features of the diagrams used by practitioners: (1) flows between stocks  and (2) links that represent the dependency of flow rates on the values of particular stocks. However, the stock-flow diagrams used in epidemiological modeling have additional useful features.  Our ``full-fledged'' stock-flow diagrams include auxiliary variables, sum variables, and partial flows.   

\emph{Auxiliary variables} are quantities on which flow functions can depend. An auxiliary variable is linked to stocks and other auxiliary variables and is equipped with an arbitrary continuous function of the values of stocks and variables to which it is linked.  In Figure \ref{fig:ExStockAndFlow}, ``Fractional Prevalence", ``Force of Infection", and ``Infection" are all examples of auxiliary variables. Auxiliary variables are important to practitioners for several reasons. First, they simplify model specification because they are reusable: instead of computing each flow rate directly as a function of stocks, we can often compute them more simply with the help of auxiliary variables. Many flow rates can depend on a single auxiliary variable. Second, they often represent quantities that are of interest to stakeholders; representing these quantities explicitly make them easier to track throughout a simulation, such as for comparison with empirical data. Third, they are practical for the communication and revision of models. Auxiliary variables give a meaningful decomposition of the flow functions, and changing a single auxiliary variable automatically revises all the flow functions that depend on it, which eliminates the need to revise all these flow functions separately.

While not explicitly distinguished in current stock-flow modeling packages, flow functions often rely on a special case of auxiliary variables called \emph{sum variables}.  Such a variable equals the sum of the values of some subset of the stocks.  In epidemiology, this frequently corresponds to the size of a population or sub-population. For example, ``Total Population" in Figure \ref{fig:ExStockAndFlow} is a sum variable. In general, a sum variable may link to only a subset of stocks. Sum variables can be seen as a particular type of auxiliary variable in which the function merely sums the values of the stocks to which it is linked---and thanks to this fact, we do not need to label sum variables with functions.

Finally, in the simple stock-flow diagrams described earlier, each flow must have an upstream stock and a downstream stock. However, practitioners often use diagrams including \emph{partial flows}, which may have only an upstream stock or only a downstream stock.  These represent the creation or the destruction of some resource, and are commonly used to represent open populations.

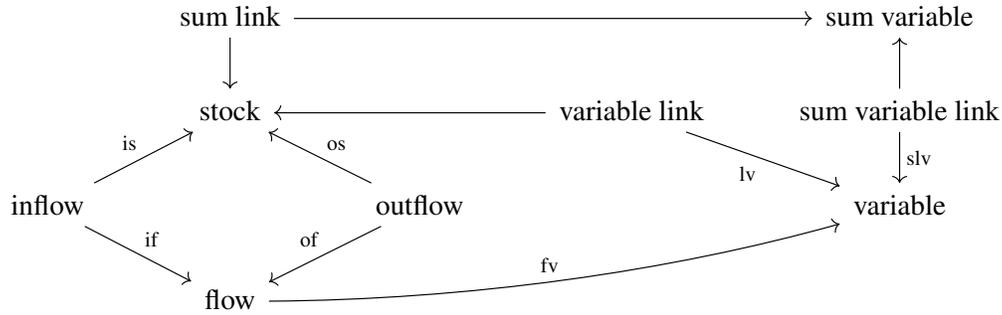
\begin{figure}[ht]
    \centering
    \begin{tikzcd}
    	& \text{sum link} &&& \text{sum variable} \\
    	& \text{stock} && \text{variable link} & \text{sum variable link} \\
    	\text{inflow} && \text{outflow} && \text{variable} \\
    	& \text{flow}
    	\arrow["\textrm{slv}", from=2-5, to=3-5]
    	\arrow["\textrm{is}", from=3-1, to=2-2]
    	\arrow["\textrm{if}", from=3-1, to=4-2]
    	\arrow[swap,"\textrm{of}", from=3-3, to=4-2]
    	\arrow[swap,"\textrm{os}", from=3-3, to=2-2]
    	\arrow["\textrm{fv}", curve={height=12pt}, from=4-2, to=3-5]
    	\arrow[from=2-5, to=1-5]
    	\arrow[from=1-2, to=2-2]
    	\arrow[from=1-2, to=1-5]
    	\arrow[swap,"\textrm{lv}", from=2-4, to=3-5]
    	\arrow[from=2-4, to=2-2]
    \end{tikzcd}
    \caption{The free category on this diagram, called $\H_\ff$, is used to define full-fledged stock-flow diagrams.  We have named only some of the arrows here.}
    \label{fig:full-fledged}
\end{figure}

Figure~\ref{fig:full-fledged} presents the category $\H_\ff$ used to define full-fledged stock-flow diagrams.   A \textbf{full-fledged stock-flow diagram} is a pair $(F,\phi)$ consisting of:
\begin{itemize}
\item a functor $F \maps \H_\ff \to \Fin\Set$ such that the functions $F(\textrm{if})$ and $F(\textrm{of})$ are injective;
\item 
a continuous function $\phi_v \maps \R^{F(\textrm{lv})^{-1}(v)} \times \R^{F(\textrm{slv})^{-1}(v)} \to \R$ for each $v \in F(\variable)$, called an \define{auxiliary function}.
\end{itemize}
The elements of $F(\variable)$ are called \define{auxiliary variables}.   The idea is that in a full-fledged stock-flow diagram each flow has its rate equal to some auxiliary variable.  Each auxiliary variable can depend on any finite multiset of sum variables and stocks, and each sum variable can depend on any finite multiset of stocks.

Given an inflow $f \in F(\textrm{inflow})$, we say that the stock $F(\textrm{is})(f)$ is the \define{upstream} stock of the flow $F(\textrm{if})(f)$.   Similarly, given an outflow  $f \in F(\textrm{outflow})$, the stock $F(\textrm{os})(f)$ is the \define{downstream} stock of the flow $F(\textrm{of})(f)$.  The injectivity of $F(\textrm{if})$ and $F(\textrm{of})$ ensure that each flow has at most one upstream stock  and at most one downstream stock.   Flows having an upstream stock but not a downstream stock or vice versa are called \define{partial flows}.  

Following the ideas of Section~\ref{subsec:category}, we can define a category $\StockFlow_\ff$ of full-fledged stock-flow diagrams.  It is useful to glue together such diagrams not only along stocks but also along sum variables and sum links, for example to keep track of the total population in an epidemiological model.  We can still do this using decorated cospans if we introduce the category $\Fin\Set^\G$, where $\G$ is the free category on this diagram:
    \[
    \begin{tikzcd}
        \stock
        &
        \textrm{ sum link}
        \arrow[l, ""]
        \arrow[r, ""] 
        & \textrm{sum variable.}
    \end{tikzcd}
    \]
There is an evident inclusion functor $\iota \maps \G \to \H_f$, so any functor $F \in \Fin\Set^{\H_f}$ restricts to a functor $F \circ \iota \in \Fin\Set^\G$, and we define an \define{open} full-fledged stock-flow diagram  to be a full-fledged stock-flow diagram $(F,\phi)$ equipped with a cospan
\[
\begin{tikzcd}
& F \circ \iota & \\
X\ar[ur,"i"] && Y \ar[ul,swap,"o"]
\end{tikzcd}
\]
where $X,Y \in \Fin\Set^\G$.   With this adjustment we can define a category $\Open(\StockFlow_\ff)$ of open full-fledged stock-flow diagrams following the ideas in \cref{subsec:open}.  Most importantly, in analogy to \cref{thm:semantics}, there is a functor
\[   v \maps \Open(\StockFlow_\ff) \to \Open(\Dynam) \]
providing a semantics for open full-fledged stock-flow diagrams.   We have implemented full-fledged stock-flow diagrams and this semantics in our Julia package StockFlow (below)---but for simplicity, \cref{sec:implementation} only discusses the simpler stock-flow diagrams treated in Sections \ref{subsec:category}--\ref{subsec:opendynam}.

\section{Implementing stock-flow diagrams in AlgebraicJulia}
\label{sec:implementation}

Existing tools for building stock-flow diagrams and simulating the systems they represent suffer from several limitations.  In \cref{sec:epidemiology}, we singled out two: an absence of compositionality, which it makes it difficult to build complex models in an intelligible manner, and a blurring of the distinction between syntax and semantics, which inhibits the reusability and interoperability of stock-flow diagrams in different contexts. In \cref{sec:math}, we addressed both these problems at the mathematical level, the first by constructing a category of open stock-flow diagrams, and the second by constructing a functor from this category into a category of open dynamical systems, whose morphisms describe systems of differential equations.  We now describe our implementation of these mathematical structures as new software for System Dynamics modeling.

Our software, available at \url{https://github.com/AlgebraicJulia/StockFlow.jl} as the open source package StockFlow, is implemented using AlgebraicJulia \cite{AlgebraicJulia}, a family of packages for applied category theory written in the Julia programming language \cite{julia2017}. The AlgebraicJulia ecosystem consists of Catlab, which implements many standard abstractions in category theory, and a collection of packages which apply these abstractions to specific domains of science and engineering. Most relevant to this article are AlgebraicDynamics \cite{libkind-baas-patterson-fairbanks2021}, implementing open dynamical systems based on ordinary and delay differential equations, and AlgebraicPetri \cite{libkind-baas-halter-patterson-fairbanks2022}, implementing Petri nets with rates and their ODE semantics.

Existing capabilities within AlgebraicJulia, based on general category-theoretic abstractions, enable us to give an economical implementation of stock-flow diagrams. The combinatorial essence of stock-flow diagrams---what we called primitive stock-flow diagrams in Section \ref{subsec:category}---are set-valued functors on a certain category $\H$, or $\H$-sets. Such structures are encompassed by the paradigm of categorical databases, for which Catlab has extensive support \cite{patterson-lynch-fairbanks2021}. Subject to one caveat, stock-flow diagrams---including the flow functions---can also be implemented as categorical databases. In this way, stock-flow diagrams become combinatorial data structures that can be manipulated algorithmically through high-level operations such as limits and colimits.  Currently, Catlab has better support for structured cospans than decorated cospans.  The latter have some theoretical advantages, as mentioned in \cref{subsec:open}, but luckily the two formalisms are equivalent for the tasks carried out here \cite{baez-courser-vasilakopoulou2022}.  Thus, at present we use structured cospans to implement open stock-flow diagrams in StockFlow.  We elaborate on this in the following subsections.

\subsection{Stock-flow diagrams as categorical databases}
\label{subsec:database}

In \cref{subsec:category} we defined a primitive stock-flow diagram to be a finite $\H$-set for a certain category $\H$, called the \define{schema} for these diagrams. In Catlab, we  present this schema as:
\begin{minted}{julia}
@present SchPrimitiveStockFlow(FreeSchema) begin
  (Stock, Flow, Link)::Ob
  (up, down)::Hom(Flow, Stock)
  src::Hom(Link, Stock)
  tgt::Hom(Link, Flow)
end
\end{minted}
To define stock-flow diagrams, we need to add a data attribute for the flow functions. In general, data attributes \cite{patterson-lynch-fairbanks2021} are a practical necessity and the main feature that distinguishes categorical databases from the standard mathematical notion of a $\C$-set, meaning a functor from $\C$ to $\Set$. In this case, we extend the schema with an attribute type and a data attribute:
\begin{minted}{julia}
@present SchStockFlow <: SchPrimitiveStockFlow begin
  FlowFunc::AttrType
  flow::Attr(Flow, FlowFunc)
end
\end{minted}

Having defined the schema, we can generate a Julia data type for stock-flow diagrams with this single line of code:
\begin{minted}{julia}
@acset_type StockFlow(SchStockFlow, index=[:up, :down, :src, :tgt])
\end{minted}
where the indices are generated for morphisms in the schema to enable fast traversal of stock-flow diagrams. A stock-flow diagram will then have the Julia type \texttt{StockFlow\{Function\}}, where \texttt{Function} is the built-in type for functions in Julia. We see that there is a gap between the mathematical definition of stock-flow diagrams and the present implementation: the domains of the flow functions should be constrained by the links, but this constraint is not yet expressible in the data model supported by Catlab. In practice this is not a major obstacle to using stock-flow diagrams, but it could motivate future work toward increasing the expressivity of database schemas and instances in Catlab.

The full-fledged stock flow diagrams described in \cref{subsec:full-schema} are implemented similarly.

\subsection{Composition using structured cospans}
\label{subsec:compose}

While the mathematics described in \cref{sec:math} uses decorated cospans, we can also describe open stock-flow diagrams using structured cospans \cite{baezcourser2020,baez-courser-vasilakopoulou2022}.  A structured cospan is a diagram of the form
\[
\begin{tikzcd}
&  X & \\
L(A) \ar[ur,"i"] && L(B) \ar[ul,swap,"o"]
\end{tikzcd}
\]
in some category $\X$, where $A,B$ are objects in some other category $\A$, and $L \maps \A \to \X$ is a functor.  When $L$ is a left adjoint, we can equivalently think of a structured cospan as a diagram
\[
\begin{tikzcd}
&  R(X) & \\
A \ar[ur,"i"] && B \ar[ul,swap,"o"]
\end{tikzcd}
\]
where $R$ is the right adjoint of $L$.  

For example, we can take $\A = \Fin\Set$, take $\X = \Fin\Set^\H$, and take $R \maps \Fin\Set^\H \to \Fin\Set$ to be the functor sending any primitive stock-flow diagram to its set of stocks.   In this case, a structured cospan amounts to an \define{open primitive stock-flow diagram}, that is a primitive stock-flow diagram $F \maps \H \to \Fin\Set$ together with functions
\[
\begin{tikzcd}
&  F(\stock) & \\
A \ar[ur,"i"] && B. \ar[ul,swap,"o"]
\end{tikzcd}
\]
With more work we can define a structured cospan category equivalent to $\Open(\StockFlow)$.   
The advantage of this change in viewpoint is that Catlab already provides a generic framework for working with structured cospans and multicospans---and it implements the composition operations available in both the hypergraph category of structured cospans and the operad algebra of structured multicospans.
In addition, Catlab includes a concrete instantiation of structured cospans for systems defined by attributed $\C$-sets.  This makes it possible, for a broad class of systems, to use structured cospans in just a few lines of code.  This is the approach taken in StockFlow and also its companion package AlgebraicPetri. In order to support the implementation of full-fledged stock-flow diagrams, we expanded the implementation of structured multicospans in Catlab so that the feet in a multicospan can be  arbitrary $\C$-sets as opposed to merely finite sets.

\subsection{Composing epidemiological models: an example}

\begin{figure}[ht]
    \centering
    \includegraphics[width=0.8\linewidth]{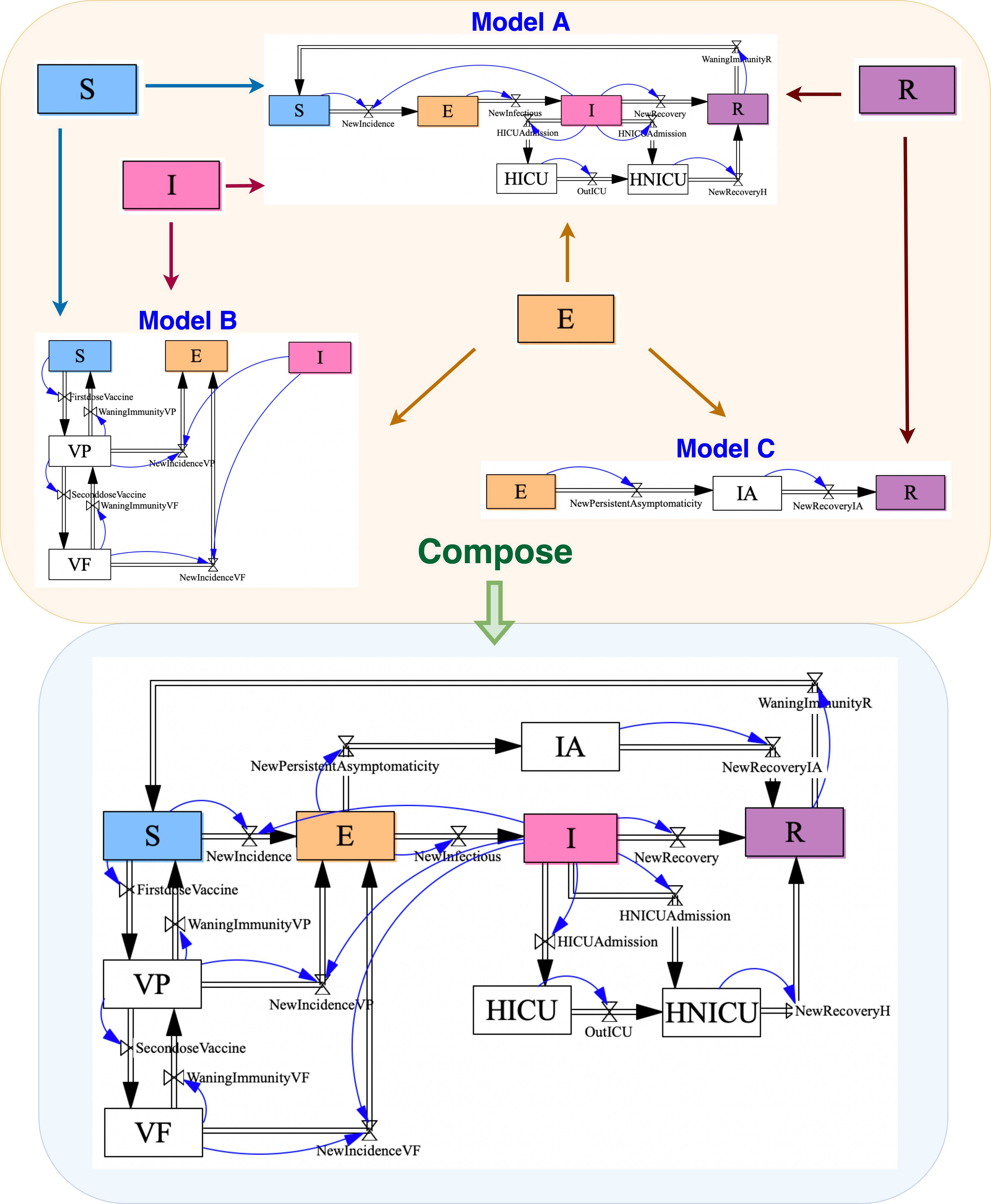}
    \caption{Example of composing a COVID-19 model from three smaller models}
    \label{fig:ExCovid}
\end{figure}

The Julia package StockFlow implements both the open stock-flow diagrams of Section~\ref{subsec:open} and the full-fledged open stock-flow diagrams of Section~\ref{subsec:full-schema}.  We now illustrate the use of this package by constructing a simplified version of a COVID-19 model that has has been employed during the pandemic for daily reporting and planning throughout the Province of Saskatchewan, and for weekly reporting by the Public Health Agency of Canada for each of Canada's ten provinces, as well as by First Nations and Inuit Health for reporting to provincial groupings of First Nations Reserves.

We build this simplified model as the composite of three component models: (A) a model of the natural history of infection, pathogen transmission, and hospitalization, (B) a model of vaccination, and (C) a model of the natural history of infection among asymptomatic or oligosymptomatic individuals.  To exhibit the ideas with a minimum of complexity, we use the simpler open stock-flow diagrams discussed in Sections \ref{subsec:category}-\ref{subsec:semantics}, not the full-fledged ones.  

The top of Figure \ref{fig:ExCovid} shows the open stock-flow diagrams for three models. Although the flow functions are omitted from the figure, they are defined in the Jupyter notebook implementing this example.\footnote{Readers interested in the code for this example can refer to \url{https://github.com/AlgebraicJulia/StockFlow.jl/blob/master/examples/primitive_schema_examples/Covid19_composition_model_in_paper.ipynb} on the GitHub repository for StockFlow.}  Model (A) is the SEIRH (Susceptible-Exposed-Symptomatic Infectious-Recovered-Hospitalized) model, which simulates the disease transmission from, course of infection amongst, and hospitalization of symptomatically infected individuals. The stocks labelled ``HICU" and ``HNICU" represent the populations of hospitalized ICU and non-ICU patients, respectively. Model (B) characterizes vaccination-related dynamics. The stock ``VP" represents individuals who are partially protected via vaccination, due to having been administered only a first dose or to waning of previously full vaccine-induced immunity. In contrast, the stock ``VF" represents individuals who are fully vaccinated by virtue of having received two or more doses of the vaccine. Notably, neither partially or fully vaccinated individuals are considered fully protected from infection. Thus, there are flows from stock ``VP" and ``VF" to ``E" that represent new infection of vaccinated individuals. Model (C) characterizes the natural history of infection in individuals who are persistently asymptomatic.  The stock ``IA" indicates the infected individuals without any symptoms. 

\begin{figure}[t]
    \centering
    \includegraphics[width=0.3\linewidth]{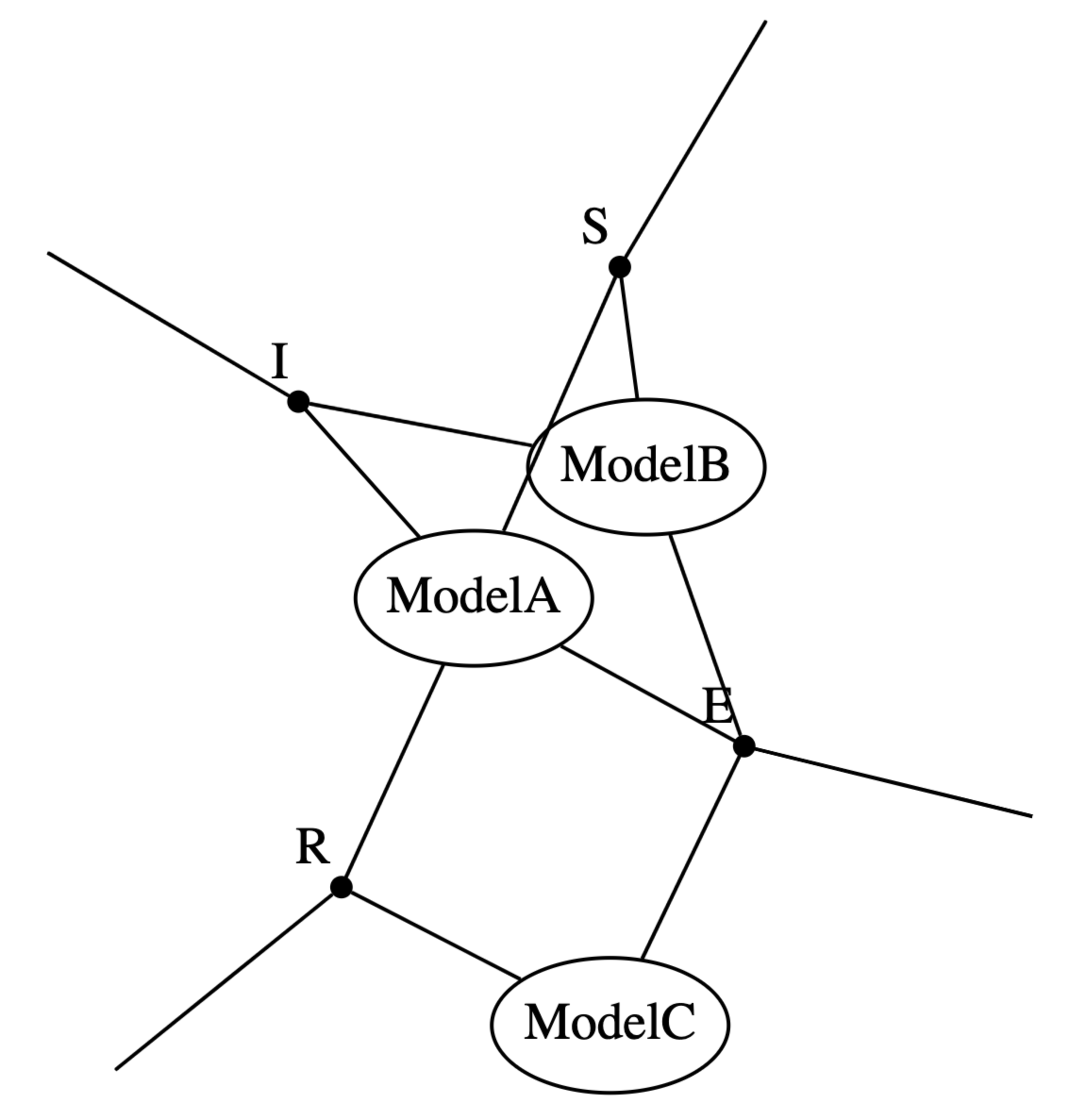}
    \caption{The undirected wiring diagram representing composing structured multicospans}
    \label{fig:ExCovid_compose}
\end{figure}

\begin{figure}[t]
    \centering
    \includegraphics[width=0.5\linewidth]{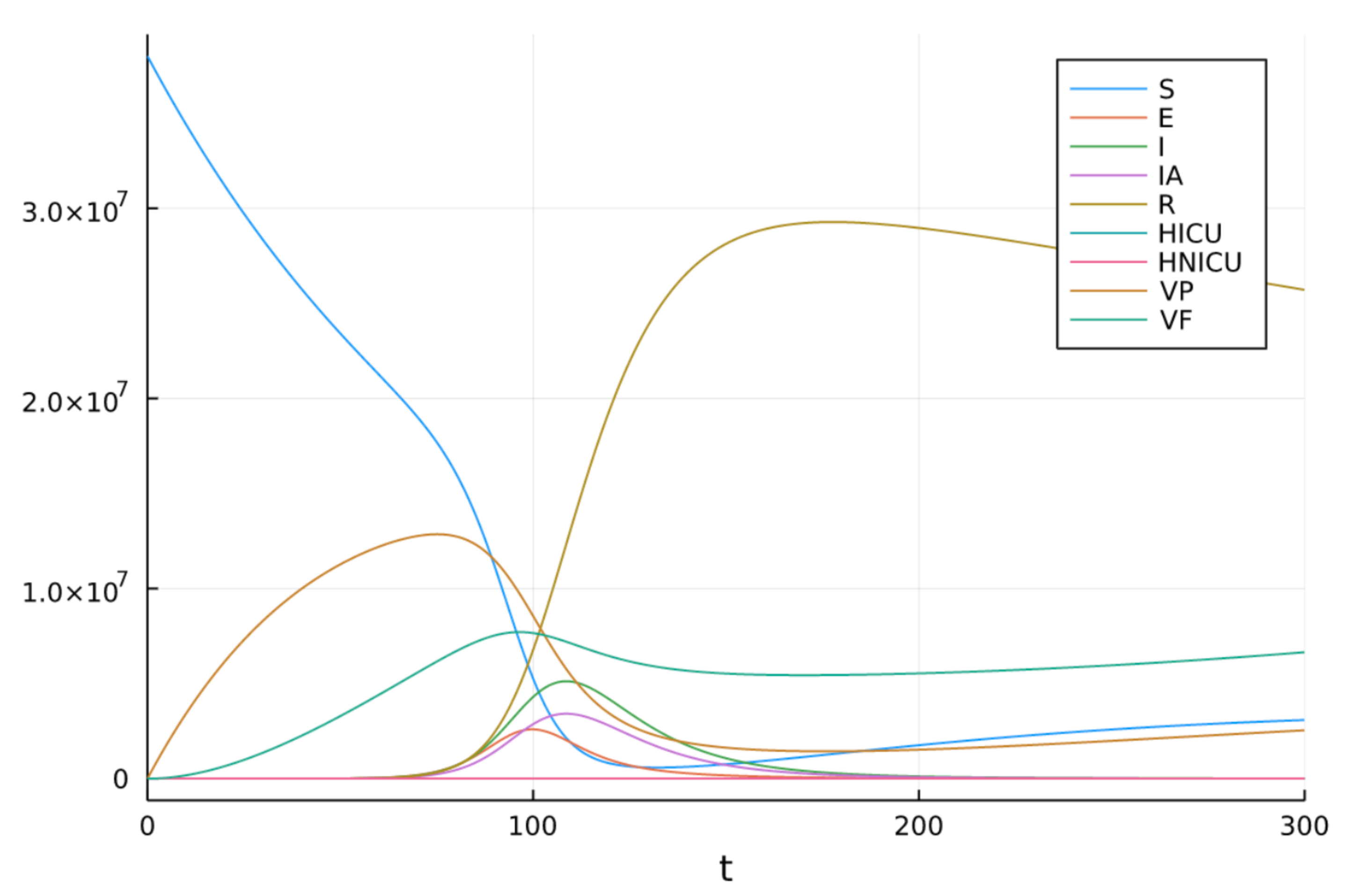}
    \caption{A simulation of the composite COVID-19 model}
    \label{fig:ExCovid_results}
\end{figure}

The ODEs finally generated from the composite stock-flow diagram are as follows:
\begin{align*}
&\dot{S}=\frac{R}{t_w}+\frac{V_P}{t_w}-\frac{\beta SI}{N} - r_v S & &\dot{E}=\frac{\beta SI}{N}+\frac{\beta (1-e_p) I V_P}{N}+\frac{\beta (1-e_f) I V_F}{N}-r_{ia}E-r_{i}E \\
&\dot{I}=r_{i}E-\frac{I}{t_r} & &\dot{R}=\frac{(1-f_H)I}{t_r}+\frac{I_A}{t_r}+\frac{H_{\NICU}}{t_H}-\frac{R}{t_w}\\
&\dot{I_A}=r_{ia}E-\frac{I_A}{t_r} & &\dot{V_F}=r_v V_P - \frac{V_F}{t_w}-\frac{\beta (1-e_f) I V_F}{N} \\
&\dot{H}_{\ICU}=\frac{f_H f_{\ICU} I}{t_r} - \frac{H_{\ICU}}{t_{\ICU}} & &\dot{V_P}=r_v S + \frac{V_F}{t_w} - \frac{V_P}{t_w} - r_v V_P - \frac{\beta (1-e_p) I V_P}{N}\\
& &
&\dot{H}_{\NICU}=\frac{H_{\ICU}}{t_{\ICU}}+\frac{f_H (1-f_{\ICU}) I}{t_r} - \frac{H_{\NICU}}{t_H}
\end{align*}
where for simplicity we use $1/t_r$ to stand for the rate at which infected individuals proceed to the next stage (stocks R, HICU or HNICU), and assume this is also the rate at which asymptomatic infected individuals go to the next stage (stock R). A plot of a solution of these equations is shown in Figure \ref{fig:ExCovid_results}. In our software, the initial values and values of parameters are defined separately from the stock-flow diagram. This design enables the users to flexibly define and explore multiple scenarios involving the same dynamical system, in a manner similar to some existing stock-flow modeling packages. For example, the parameter values used in Figure \ref{fig:ExCovid_results} are from Canada's population. We can efficiently run this model on other populations (e.g., the United States) by changing these parameter values.\footnote{Readers interested in the code for this example can refer to \url{https://github.com/AlgebraicJulia/StockFlow.jl/blob/master/examples/primitive_schema_examples/Covid19_composition_model_in_paper.ipynb} on the GitHub repository for StockFlow.}

This particular COVID-19 model simplifies the structure and assumptions of the model used in practice. Our example omits features such as characterization of active case-finding, diagnosis and reporting, mortality, and transmission by asymptomatic/oligosymptomatic individuals, because the simplified stock-flow diagrams do not support auxiliary variables, sum variables, and partial flows. However, our StockFlow package implements the full-fledged stock-flow diagrams defined in Section~\ref{subsec:full-schema}, and hence enables the application of these additional features.

\subsection{Future work}

Three lines of work are underway to extend the work described here: extending the Julia application programming interface (API), constructing a graphical user interface, and training modelers to use StockFlow.

For the first, key priorities include supporting within-diagram constants in the diagrams and allowing auxiliary variables to depend on other auxiliary variables in an acyclic fashion.  Approaches are also being explored to allow hierarchical composition of diagrams and ensure consistency of the functions governing flows, in the sense of dimensional analysis.

Second, we aim to help modelers use the software without needing to know category theory.  Thus, building atop the API, we are currently constructing a declarative, real-time graphical user interface (GUI) for collaboratively constructing, manipulating, composing and packaging stock-flow diagrams. 

Third, we are training both students and professional modelers in the use of Stockflow, and will ramp up these efforts soon, once the GUI achieves sufficient functionality to offer practical utility.   This will both build a user base and provide useful feedback as to how epidemiological modelers interact with the software.

\paragraph{Acknowledgments}

Author Osgood wishes to express his appreciation to SHA, PHAC and FNIH for support of varying elements of this work, to SYK, and to NSERC for support via the Discovery Grants program (RGPIN 2017-04647).  Patterson acknowledges support from AFOSR (Award FA9550-20-1-0348).  Baez and Libkind thank the Topos Institute for their support.

\bibliographystyle{eptcs}
\bibliography{references}

\begin{thebibliography}{10}
\providecommand{\bibitemdeclare}[2]{}
\providecommand{\surnamestart}{}
\providecommand{\surnameend}{}
\providecommand{\urlprefix}{Available at }
\providecommand{\url}[1]{\texttt{#1}}
\providecommand{\href}[2]{\texttt{#2}}
\providecommand{\urlalt}[2]{\href{#1}{#2}}
\providecommand{\doi}[1]{doi:\urlalt{https://doi.org/#1}{#1}}
\providecommand{\eprint}[1]{arXiv:\urlalt{https://arxiv.org/abs/#1}{#1}}
\providecommand{\bibinfo}[2]{#2}

\bibitemdeclare{misc}{AlgebraicJulia}
\bibitem{AlgebraicJulia}
\bibinfo{author}{\surnamestart {{AlgebraicJulia} team}\surnameend}:
  \emph{\bibinfo{title}{{AlgebraicJulia}: Bringing compositionality to
  technical computing}}.
\newblock \bibinfo{note}{Available at
  \href{https://www.algebraicjulia.org}{https://www.algebraicjulia.org}}.

\bibitemdeclare{book}{anderson1992infectious}
\bibitem{anderson1992infectious}
\bibinfo{author}{R.~M. \surnamestart Anderson\surnameend} \&
  \bibinfo{author}{R.~M. \surnamestart May\surnameend} (\bibinfo{year}{1992}):
  \emph{\bibinfo{title}{Infectious Diseases of Humans: Dynamics and Control}}.
\newblock \bibinfo{publisher}{Oxford University Press},
  \doi{10.1017/s0950268800059896}.

\bibitemdeclare{article}{baezcourser2020}
\bibitem{baezcourser2020}
\bibinfo{author}{J.~C. \surnamestart Baez\surnameend} \&
  \bibinfo{author}{K.~\surnamestart Courser\surnameend} (\bibinfo{year}{2020}):
  \emph{\bibinfo{title}{Structured cospans}}.
\newblock {\slshape \bibinfo{journal}{Theor. Appl. Categ.}}
  \bibinfo{volume}{35}, pp. \bibinfo{pages}{1771--1822},
  \doi{10.48550/arXiv.1911.04630}.
\newblock \bibinfo{note}{Available at
  \href{https://arxiv.org/abs/1911.04630}{arXiv:1911.04630}}.

\bibitemdeclare{article}{baez-courser-vasilakopoulou2022}
\bibitem{baez-courser-vasilakopoulou2022}
\bibinfo{author}{J.~C. \surnamestart Baez\surnameend},
  \bibinfo{author}{K.~\surnamestart Courser\surnameend} \&
  \bibinfo{author}{C.~\surnamestart Vasilakopoulou\surnameend}
  (\bibinfo{year}{2022}): \emph{\bibinfo{title}{Structured versus decorated
  cospans}}.
\newblock {\slshape \bibinfo{journal}{Compositionality}}
  \bibinfo{volume}{4}(\bibinfo{number}{3}),
  \doi{10.32408/compositionality-4-3}.
\newblock \bibinfo{note}{Available at
  \href{https://arxiv.org/abs/2101.09363}{arXiv:2101.09363}}.

\bibitemdeclare{article}{baezpollard2017}
\bibitem{baezpollard2017}
\bibinfo{author}{J.~C. \surnamestart Baez\surnameend} \& \bibinfo{author}{B.~S.
  \surnamestart Pollard\surnameend} (\bibinfo{year}{2017}):
  \emph{\bibinfo{title}{A compositional framework for reaction networks}}.
\newblock {\slshape \bibinfo{journal}{Reviews in Mathematical Physics}}
  \bibinfo{volume}{29}(\bibinfo{number}{09}), p. \bibinfo{pages}{1750028},
  \doi{10.1142/s0129055x17500283}.
\newblock \bibinfo{note}{Available at
  \href{https://arxiv.org/abs/1704.02051}{arXiv:1704.02051}}.

\bibitemdeclare{article}{julia2017}
\bibitem{julia2017}
\bibinfo{author}{J.~\surnamestart Bezanson\surnameend},
  \bibinfo{author}{A.~\surnamestart Edelman\surnameend},
  \bibinfo{author}{S.~\surnamestart Karpinski\surnameend} \&
  \bibinfo{author}{V.~B. \surnamestart Shah\surnameend} (\bibinfo{year}{2017}):
  \emph{\bibinfo{title}{Julia: A fresh approach to numerical computing}}.
\newblock {\slshape \bibinfo{journal}{SIAM Review}}
  \bibinfo{volume}{59}(\bibinfo{number}{1}), pp. \bibinfo{pages}{65--98},
  \doi{10.1137/141000671}.

\bibitemdeclare{article}{destoumieux2018one}
\bibitem{destoumieux2018one}
\bibinfo{author}{D.~\surnamestart Destoumieux-Garz{\'o}n\surnameend},
  \bibinfo{author}{P.~\surnamestart Mavingui\surnameend},
  \bibinfo{author}{G.~\surnamestart Boetsch\surnameend},
  \bibinfo{author}{J.~\surnamestart Boissier\surnameend},
  \bibinfo{author}{F.~\surnamestart Darriet\surnameend} et~al.
  (\bibinfo{year}{2018}): \emph{\bibinfo{title}{The {O}ne {H}ealth concept: 10
  years old and a long road ahead}}.
\newblock {\slshape \bibinfo{journal}{Frontiers in Veterinary Science}},
  p.~\bibinfo{pages}{14}, \doi{10.3389/fvets.2018.00014}.

\bibitemdeclare{book}{diekmann2000mathematical}
\bibitem{diekmann2000mathematical}
\bibinfo{author}{O.~\surnamestart Diekmann\surnameend} \&
  \bibinfo{author}{J.~A.~P. \surnamestart Heesterbeek\surnameend}
  (\bibinfo{year}{2000}): \emph{\bibinfo{title}{Mathematical Epidemiology of
  Infectious Diseases: Model Building, Analysis and Interpretation}}.
\newblock \bibinfo{volume}{5}, \bibinfo{publisher}{John Wiley \& Sons},
  \doi{10.1093/ije/30.1.186}.

\bibitemdeclare{article}{faghihi2015sustainable}
\bibitem{faghihi2015sustainable}
\bibinfo{author}{V.~\surnamestart Faghihi\surnameend}, \bibinfo{author}{A.~R.
  \surnamestart Hessami\surnameend} \& \bibinfo{author}{D.~N. \surnamestart
  Ford\surnameend} (\bibinfo{year}{2015}): \emph{\bibinfo{title}{Sustainable
  campus improvement program design using energy efficiency and conservation}}.
\newblock {\slshape \bibinfo{journal}{Journal of Cleaner Production}}
  \bibinfo{volume}{107}, pp. \bibinfo{pages}{400--409},
  \doi{10.1016/j.jclepro.2014.12.040}.

\bibitemdeclare{article}{fong2015}
\bibitem{fong2015}
\bibinfo{author}{B.~\surnamestart Fong\surnameend} (\bibinfo{year}{2015}):
  \emph{\bibinfo{title}{Decorated cospans}}.
\newblock {\slshape \bibinfo{journal}{Theor. Appl. Categ.}}
  \bibinfo{volume}{30}(\bibinfo{number}{33}), pp. \bibinfo{pages}{1096--1120},
  \doi{10.48550/arXiv.1502.00872}.
\newblock \bibinfo{note}{Available at
  \href{https://arxiv.org/abs/1502.00872}{arXiv:1502.00872}}.

\bibitemdeclare{misc}{fongspivak2018}
\bibitem{fongspivak2018}
\bibinfo{author}{B.~\surnamestart Fong\surnameend} \& \bibinfo{author}{D.~I.
  \surnamestart Spivak\surnameend} (\bibinfo{year}{2018}):
  \emph{\bibinfo{title}{Hypergraph categories}},
  \doi{10.48550/arXiv.1806.08304}.
\newblock \bibinfo{note}{Available at
  \href{https://arxiv.org/abs/1305.0297}{arXiv:1305.0297}}.

\bibitemdeclare{book}{forrester}
\bibitem{forrester}
\bibinfo{author}{J.~W. \surnamestart Forrester\surnameend}
  (\bibinfo{year}{1961}): \emph{\bibinfo{title}{Industrial Dynamics}}.
\newblock \bibinfo{publisher}{Pegasus Communications}.

\bibitemdeclare{inproceedings}{guneralp2005progress}
\bibitem{guneralp2005progress}
\bibinfo{author}{B.~\surnamestart G\"uneralp\surnameend},
  \bibinfo{author}{J.~D. \surnamestart Sterman\surnameend},
  \bibinfo{author}{N.~P. \surnamestart Repenning\surnameend},
  \bibinfo{author}{R.S. \surnamestart Langer\surnameend},
  \bibinfo{author}{J.~I. \surnamestart Rowe\surnameend} \&
  \bibinfo{author}{J.M. \surnamestart Yanni\surnameend} (\bibinfo{year}{2005}):
  \emph{\bibinfo{title}{Progress in eigenvalue elasticity analysis as a
  coherent loop dominance analysis tool}}.
\newblock In: {\slshape \bibinfo{booktitle}{The 23rd International Conference
  of The System Dynamics Society}}, \bibinfo{organization}{System Dynamics
  Society}.

\bibitemdeclare{misc}{molecules}
\bibitem{molecules}
\bibinfo{author}{J.~\surnamestart Hines\surnameend}:
  \emph{\bibinfo{title}{Molecules of structure: building blocks for {S}ystem
  {D}ynamics models, Version 2.02}}.
\newblock \bibinfo{note}{Available at
  \href{https://vensim.com/modeling-with-molecules-2-02/}{https://vensim.com/modeling-with-molecules-2-02/}}.

\bibitemdeclare{book}{CBSD}
\bibitem{CBSD}
\bibinfo{author}{P.~S. \surnamestart Hovmand\surnameend}
  (\bibinfo{year}{2014}): \emph{\bibinfo{title}{Community Based System
  Dynamics}}.
\newblock \bibinfo{publisher}{Springer}, \doi{10.1007/978-1-4614-8763-0}.

\bibitemdeclare{article}{kampmann2012feedback}
\bibitem{kampmann2012feedback}
\bibinfo{author}{C.~E. \surnamestart Kampmann\surnameend}
  (\bibinfo{year}{2012}): \emph{\bibinfo{title}{Feedback loop gains and system
  behavior (1996)}}.
\newblock {\slshape \bibinfo{journal}{System Dynamics Review}}
  \bibinfo{volume}{28}(\bibinfo{number}{4}), pp. \bibinfo{pages}{370--395},
  \doi{10.1002/sdr.1483}.

\bibitemdeclare{article}{kermack1927contribution}
\bibitem{kermack1927contribution}
\bibinfo{author}{W.~O. \surnamestart Kermack\surnameend} \&
  \bibinfo{author}{A.~G. \surnamestart McKendrick\surnameend}
  (\bibinfo{year}{1927}): \emph{\bibinfo{title}{A contribution to the
  mathematical theory of epidemics}}.
\newblock {\slshape \bibinfo{journal}{Proceedings of the Royal Society of
  London. Series A}} \bibinfo{volume}{115}(\bibinfo{number}{772}), pp.
  \bibinfo{pages}{700--721}, \doi{10.1098/rspa.1927.0118}.

\bibitemdeclare{article}{libkind-baas-halter-patterson-fairbanks2022}
\bibitem{libkind-baas-halter-patterson-fairbanks2022}
\bibinfo{author}{S.~\surnamestart Libkind\surnameend},
  \bibinfo{author}{A.~\surnamestart Baas\surnameend},
  \bibinfo{author}{M.~\surnamestart Halter\surnameend},
  \bibinfo{author}{E.~\surnamestart Patterson\surnameend} \&
  \bibinfo{author}{J.~\surnamestart Fairbanks\surnameend}
  (\bibinfo{year}{2022}): \emph{\bibinfo{title}{An algebraic framework for
  structured epidemic modeling}}.
\newblock {\slshape \bibinfo{journal}{Phil. Trans. R. Soc. A.}}
  \bibinfo{volume}{380}, p. \bibinfo{pages}{20210309},
  \doi{10.1098/rsta.2021.0309}.
\newblock \bibinfo{note}{Available at
  \href{https://arxiv.org/abs/2203.16345}{arXiv:2203.16345}}.

\bibitemdeclare{article}{libkind-baas-patterson-fairbanks2021}
\bibitem{libkind-baas-patterson-fairbanks2021}
\bibinfo{author}{S.~\surnamestart Libkind\surnameend},
  \bibinfo{author}{A.~\surnamestart Baas\surnameend},
  \bibinfo{author}{E.~\surnamestart Patterson\surnameend} \&
  \bibinfo{author}{J.~\surnamestart Fairbanks\surnameend}
  (\bibinfo{year}{2022}): \emph{\bibinfo{title}{Operadic modeling of dynamical
  systems: mathematics and computation}}.
\newblock {\slshape \bibinfo{journal}{EPTCS}} \bibinfo{volume}{372}, pp.
  \bibinfo{pages}{192--206}, \doi{10.4204/eptcs.372.14}.
\newblock \bibinfo{note}{Available at
  \href{https://arxiv.org/abs/2105.12282}{arXiv:2105.12282}}.

\bibitemdeclare{incollection}{lich2020engaging}
\bibitem{lich2020engaging}
\bibinfo{author}{K.~H. \surnamestart Lich\surnameend} \&
  \bibinfo{author}{J.~\surnamestart Kuhlberg\surnameend}
  (\bibinfo{year}{2020}): \emph{\bibinfo{title}{Engaging Stakeholders in
  Mapping and Modeling Complex Systems Structure to Inform Population Health
  Research and Action}}.
\newblock In \bibinfo{editor}{Y.~\surnamestart Apostolopoulos\surnameend},
  \bibinfo{editor}{K.~H. \surnamestart Lich\surnameend} \&
  \bibinfo{editor}{M.~K. \surnamestart Lemke\surnameend}, editors: {\slshape
  \bibinfo{booktitle}{Complex Systems and Population Health}},
  \bibinfo{publisher}{Oxford University Press}, p. \bibinfo{pages}{119},
  \doi{10.1093/oso/9780190880743.003.0009}.

\bibitemdeclare{article}{mackenzie2019one}
\bibitem{mackenzie2019one}
\bibinfo{author}{J.~S. \surnamestart Mackenzie\surnameend} \&
  \bibinfo{author}{M.~\surnamestart Jeggo\surnameend} (\bibinfo{year}{2019}):
  \emph{\bibinfo{title}{The {O}ne {H}ealth approach---why is it so important?}}
\newblock {\slshape \bibinfo{journal}{Tropical Medicine and Infectious
  Disease}} \bibinfo{volume}{4}(\bibinfo{number}{2}), p.~\bibinfo{pages}{88},
  \doi{10.3390/tropicalmed4020088}.

\bibitemdeclare{article}{patterson-lynch-fairbanks2021}
\bibitem{patterson-lynch-fairbanks2021}
\bibinfo{author}{E.~\surnamestart Patterson\surnameend},
  \bibinfo{author}{O.~\surnamestart Lynch\surnameend} \&
  \bibinfo{author}{J.~\surnamestart Fairbanks\surnameend}
  (\bibinfo{year}{2022}): \emph{\bibinfo{title}{Categorical data structures for
  technical computing}}.
\newblock {\slshape \bibinfo{journal}{Compositionality}}
  \bibinfo{volume}{4}(\bibinfo{number}{2}),
  \doi{10.32408/compositionality-4-5}.
\newblock \bibinfo{note}{Available at
  \href{https://arxiv.org/abs/2106.04703}{arXiv:2106.04703}}.

\bibitemdeclare{inproceedings}{richmond1985stella}
\bibitem{richmond1985stella}
\bibinfo{author}{B.~\surnamestart Richmond\surnameend} (\bibinfo{year}{1985}):
  \emph{\bibinfo{title}{STELLA: Software for bringing {S}ystem {D}ynamics to
  the other 98\%}}.
\newblock In: {\slshape \bibinfo{booktitle}{Proceedings of the 1985
  International Conference of the System Dynamics Society: 1985 International
  System Dynamics Conference}}, \bibinfo{organization}{System Dynamics
  Society}, pp. \bibinfo{pages}{706--718}.

\bibitemdeclare{article}{ross1916application}
\bibitem{ross1916application}
\bibinfo{author}{R.~\surnamestart Ross\surnameend} (\bibinfo{year}{1916}):
  \emph{\bibinfo{title}{An application of the theory of probabilities to the
  study of a priori pathometry---{{Part I}}}}.
\newblock {\slshape \bibinfo{journal}{Proceedings of the Royal Society of
  London. Series A}} \bibinfo{volume}{92}(\bibinfo{number}{638}), pp.
  \bibinfo{pages}{204--230}, \doi{10.1098/rspa.1916.0007}.

\bibitemdeclare{article}{ross1917application}
\bibitem{ross1917application}
\bibinfo{author}{R.~\surnamestart Ross\surnameend} \& \bibinfo{author}{H.~P.
  \surnamestart Hudson\surnameend} (\bibinfo{year}{1917}):
  \emph{\bibinfo{title}{An application of the theory of probabilities to the
  study of a priori pathometry---{{Part II}}}}.
\newblock {\slshape \bibinfo{journal}{Proceedings of the Royal Society of
  London. Series A}} \bibinfo{volume}{93}(\bibinfo{number}{650}), pp.
  \bibinfo{pages}{212--225}, \doi{10.1098/rspa.1917.0014}.

\bibitemdeclare{inproceedings}{saleh2005comprehensive}
\bibitem{saleh2005comprehensive}
\bibinfo{author}{M.~\surnamestart Saleh\surnameend},
  \bibinfo{author}{P.~\surnamestart Davidsen\surnameend} \&
  \bibinfo{author}{K.~\surnamestart Bayoumi\surnameend} (\bibinfo{year}{2005}):
  \emph{\bibinfo{title}{A comprehensive eigenvalue analysis of system dynamics
  models}}.
\newblock In: {\slshape \bibinfo{booktitle}{Proceedings of the International
  System Dynamics Conference, The System Dynamics Society, Boston}},
  \bibinfo{organization}{System Dynamics Society}, p. \bibinfo{pages}{130}.

\bibitemdeclare{misc}{spivak2013}
\bibitem{spivak2013}
\bibinfo{author}{D.~I. \surnamestart Spivak\surnameend} (\bibinfo{year}{2013}):
  \emph{\bibinfo{title}{The operad of wiring diagrams: formalizing a graphical
  language for databases, recursion, and plug-and-play circuits}},
  \doi{10.48550/arXiv.1305.0297}.
\newblock \bibinfo{note}{Available at
  \href{https://arxiv.org/abs/1305.0297}{arXiv:1305.0297}}.

\bibitemdeclare{article}{sterman1994learning}
\bibitem{sterman1994learning}
\bibinfo{author}{J.~D. \surnamestart Sterman\surnameend}
  (\bibinfo{year}{1994}): \emph{\bibinfo{title}{Learning in and about complex
  systems}}.
\newblock {\slshape \bibinfo{journal}{System Dynamics Review}}
  \bibinfo{volume}{10}(\bibinfo{number}{2-3}), pp. \bibinfo{pages}{291--330},
  \doi{10.1002/sdr.4260100214}.

\bibitemdeclare{book}{sterman2000business}
\bibitem{sterman2000business}
\bibinfo{author}{J.~D. \surnamestart Sterman\surnameend}
  (\bibinfo{year}{2000}): \emph{\bibinfo{title}{Business Dynamics}}.
\newblock \bibinfo{publisher}{McGraw-Hill, Inc.}

\bibitemdeclare{article}{trecker2015revised}
\bibitem{trecker2015revised}
\bibinfo{author}{M.~A. \surnamestart Trecker\surnameend},
  \bibinfo{author}{D.~J. \surnamestart Hogan\surnameend},
  \bibinfo{author}{C.~L. \surnamestart Waldner\surnameend},
  \bibinfo{author}{J.~R. \surnamestart Dillon\surnameend} \&
  \bibinfo{author}{N.~D. \surnamestart Osgood\surnameend}
  (\bibinfo{year}{2015}): \emph{\bibinfo{title}{Revised simulation model does
  not predict rebound in gonorrhoea prevalence where core groups are treated in
  the presence of antimicrobial resistance}}.
\newblock {\slshape \bibinfo{journal}{Sexually Transmitted Infections}}
  \bibinfo{volume}{91}(\bibinfo{number}{4}), pp. \bibinfo{pages}{300--302},
  \doi{10.1136/sextrans-2014-051792}.

\bibitemdeclare{book}{vennix1996group}
\bibitem{vennix1996group}
\bibinfo{author}{J.~A.~M. \surnamestart Vennix\surnameend}
  (\bibinfo{year}{1996}): \emph{\bibinfo{title}{Group Model Building:
  Facilitating Team Learning Using System Dynamics}}.
\newblock \bibinfo{publisher}{Chichester}.

\end{thebibliography}

\end{document}